\theoremstyle{plain}
\newtheorem{theorem}{Theorem}[section]
\newtheorem{proposition}[theorem]{Proposition}
\newtheorem{lemma}[theorem]{Lemma}
\newtheorem{corollary}[theorem]{Corollary}
\theoremstyle{definition}
\newtheorem{definition}[theorem]{Definition}
\newtheorem{example}[theorem]{Example}
\newtheorem{remark}[theorem]{Remark}
\newtheorem{notation}[theorem]{Notation}
\newtheorem{assumption}{Assumption}
\newcommand{\hide}[1]{}
\newcommand{\nat}{\mathbb{N}}
\newcommand{\ol}[1]{\overline{#1}}
\newcommand{\wt}[1]{\widetilde{#1}}
\newcommand{\B}{\mathcal{B}}
\newcommand{\BA}{\mathit{BA}}
\newcommand{\C}{\mathcal{C}}
\newcommand{\D}{\mathcal{D}}
\newcommand{\Dsub}{\D_{\leq 1}}
\newcommand{\Deq}{\D_{=1}}
\newcommand{\Rel}{\mathit{Rel}}
\newcommand{\Set}{\mathit{Set}}
\newcommand{\BAlg}{B\dash\mathit{Alg}}
\renewcommand{\P}{P}
\newcommand{\Pom}{\P_{\omega}}
\newcommand{\Id}{\mathit{Id}}
\newcommand{\T}{T}
\newcommand{\olT}{\overline{\T}}
\newcommand{\wtT}{\widetilde{\T}}
\renewcommand{\L}{\mathcal{L}}
\newcommand{\Kl}[1]{\mathit{Kl}#1}
\newcommand{\id}{\mathit{id}}
\newcommand{\tr}{\mathit{tr}}
\newcommand{\wttr}{\widetilde\tr}
\newcommand{\st}{\mathit{st}}
\newcommand{\dst}{\mathit{dst}}
\newcommand{\struct}[1]{\langle #1\rangle}
\newcommand{\To}{\Rightarrow}
\newcommand{\too}{\longrightarrow}
\newcommand{\ladj}{\dashv}
\newcommand{\dash}{\mbox{-}}
\newcommand{\tick}{\surd}
\newcommand{\sem}[1]{[#1]}
\newcommand{\srngs}{\mathcal{S}}
\newcommand{\srngsom}[1]{(\srngs^{#1})_{\omega}}
\renewcommand{\lim}{\mathit{lim}}
\newcommand{\op}[1]{{#1}^{\mathit{op}}}
\newcommand{\bbtwo}{\mathbbm{2}}
\newcommand{\acal}{\mathcal{ A}}
\newcommand{\dcal}{\mathcal{ D}}
\newcommand{\pcal}{\mathcal{ P}}
\newcommand{\twobb}{\mathbbm{2}}
\newcommand{\Act}{\mathit{Act}}
\newcommand{\SLat}{\mathit{SLat}}
\newcommand{\diam}[1]{\langle #1\rangle}
\newcommand{\lsem}{\mathopen{[\![}}    
\newcommand{\rsem}{\mathclose{]\!]}}    
\renewcommand{\sem}[1]{\lsem #1 \rsem}
\renewcommand{\min}{\mathit{min}}
\newcommand{\Klom}[1]{\mathit{Kl}_{\omega}#1}
\newcommand{\wtgamma}{{\widetilde\gamma}}
\renewcommand{\P}{\mathcal{P}}
\newcommand{\eps}{\varepsilon}
\newcommand{\QSet}{Q_0}
\begin{document}
\title{Generic Trace Logics}
\author{Christian Kissig and Alexander Kurz\\ 
\small University of Leicester, Department of Computer Science\\
}
\maketitle

\abstract{We combine previous work on coalgebraic logic with the
  coalgebraic traces semantics of Hasuo, Jacobs, and Sokolova.}

\section{Introduction}

The coalgebraic approach to modal logic has been pursued
successfully over the last years. The basic ideas (see eg
\cite{moss:cl,pattinson:cml-j,schroeder:fossacs05,kurz:sigact06}),
are the following.
\begin{itemize}
\item A $T$-\emph{coalgebra}, consisting of a carrier $X$ and a
  `next-step' map $\xi:X\to TX$, represents a transition system. For
  example, with $\pcal X$ the set of finite subsets of $X$ and
  $\Act$ a set of actions, $X\to\pcal(\Act\times X)$ is a
  labelled transition system.
\item Any particular choice of $T$ yields a canonical notion of
  $T$-\emph{bisimilarity}. For example, for $X\to\P(\Act\times X)$
  we obtain the Milner-Park notion of bisimilarity~\cite{aczel:nwfs}
  whereas for $X\to\D(\Act\times X)$, with $\D X$ denoting the
  set of probability distributions on $X$, we obtain the
  notion of bisimilarity described in \cite{vink-rutt:prob-bisim}.
\item Moreover, for any choice of $T$, we can find a logic for
  $T$-coalgebras which is expressive (ie distinguishes non-bisimilar
  states) and comes with a complete calculus. These logics are modal
  logics in the sense that formulas are invariant under
  $T$-bisimilarity.
\end{itemize}

The work on coalgebraic logic so far is focused on $T$-bisimilarity.

\medskip In parallel, Jacobs and collaborators
\cite{Jacobs04TraceSemantics,HasuoJacobsSokolova06GenericTraceTheory,Hasuo08PhD} showed that coalgebras not only
provide a framework for bisimilarity, but also for trace semantics:
\begin{itemize}
\item A $(B,T)$-coalgebra $X\to BTX$ is now given wrt a `transition
  type' $T$ and a `branching type' $B$. For example, with $BX=\pcal X$
  and $TX=\{*\}+\Act\times X$, a $X\to\P(\{*\}+\Act\times X)$ is a
  non-deterministic automaton.
\item Different choices of $B$ yield different notions of trace
  semantics. With $B=\P$, the trace semantics of
  $X\to\pcal(\{*\}+\Act\times X)$ identifies states that accept the same
  language. With $B=\D$, the trace semantics of
  $X\to\D(\{*\}+\Act\times X)$ identifies states that accept the same
  (finite) traces with the same probabilities.
\end{itemize}

The work of Jacobs et al is build on several assumptions, which limit the
generality of the definition of trace semantics. For instance, it is not
possible to define the trace semantics of finitely branching transition
systems.

\medskip\noindent\textbf{Results } In this paper, we reconsider the
definition of trace semantics in the category of algebras for the
branching type $B$. This allows us to includes the often occuring finite
non-determinism and finitely graded branching.

Moreover we propose a generic definition of coalgebraic logics characterising
states up to trace equivalence. Our definition of trace logics is build upon a 
dual adjunction on the category of algebras for the branching type, and matches
the definition of coalgebraic modal logics for $T$-bisimulation. 

\medskip\noindent\textbf{Structure of the paper } After reviewing
material known from the literature, Section~\ref{sec:EM} introduces
trace semantcis in the category of Eilenberg-Moore algebras of the
monad $B$ describing the branching type. Section~\ref{MT-logic}
describes trace logics using the adjunction induced by the closed
structure a the commutative monad $B$. Section~\ref{sec:pred-lift}
explains how to define logics via predicate lifting, a notion known
set-coalgebras, which is adapted to our
setting. Section~\ref{sec:gen-logic} introduces the notion of a
generic trace logic and uses it to prove a particular instance to be
sound, complete, and expressive.

\medskip\noindent\textbf{Acknowledgements} We would like to thank
Ichiro Hasuo and Bart Jacobs.

\section{Two Examples}\label{sec:examples}
\label{S:Examples}

Consider $\gamma:X\to\Pom(\{*\}+\Act\times X)$. $(X,\gamma)$ is a
finitely non-deterministic automaton. Indeed, with $1$ as $\{*\}$ and
$+$ as (disjoint) union, we read $(a,x')\in\gamma(x)$ as $x$ can input
$a$ and go to $x'$ and we read $*\in\gamma(x)$ as $x$ is an accepting
state.

Now consider a logic
\begin{equation}\label{equ:exle1-syntax}
\phi ::= 0\mid \tick \mid \phi\vee\phi\mid \diam{a}\phi
\end{equation}
with compositional semantics
\begin{align}
x\not\Vdash 0 & \\
x\Vdash \tick & \Leftrightarrow \ *\in\gamma(x)\\
x\Vdash \phi\vee\psi \ & \Leftrightarrow \ x\Vdash\phi\ \textrm{or} \
x\Vdash\psi \\
x\Vdash\diam{a}\phi \ & \Leftrightarrow \ (a,x')\in\gamma(x)\ \textrm{and}
\ x'\Vdash\phi \label{equ:exle1-semantics-diam}
\end{align} 
and as axiomatisation the usual laws for falsum (0) and disjunction
($\vee$) plus the axioms
\begin{equation}\label{equ:exle1-axioms}
  \diam{a}0 = 0 \quad\quad \diam{a}(\phi\vee\psi) =
  \diam{a}\phi\vee\diam{a}\psi
\end{equation}
Note that this implies the typical axiom we would expect for trace
logics
\begin{equation}
  \label{Eq:Bifurcation}
\diam{a}(\diam{b}\phi\vee\diam{c}\psi) =
\diam{a}\diam{b}\phi\vee\diam{a}\diam{c}\psi
\end{equation}
Our development will not only provide a generic proof for the fact
that this logic is sound, complete and expressive, but also provide
conceptual explanations for why we can have falsum and disjunction,
but not negation and conjunction. 

\medskip To see that the interaction of the modal operators $\diam{a}$
with the propositional operators (0,$\vee$) is subtle, consider as a
second example $\gamma:X\to\D(\{*\}+\Act\times X)$ where $\dcal Y$ is the
set of finitely supported discrete probability distributions on
$Y$. $\gamma(x,*)\in[0,1]$ is the probability of terminating successfully
and $\gamma(x,a,x')\in[0,1]$ is the probability of continuing with $a$
and transiting to $x'$. Two states $x,x'$ are trace equivalent if
(inventing an adhoc notation similar to the logic above)
\begin{equation}\label{equ:exle:prob}
x\Vdash p\cdot\diam{a_0}\ldots\diam{a_n}\tick \ \Leftrightarrow \
x'\Vdash p\cdot\diam{a_0}\ldots\diam{a_n}\tick
\end{equation}
which we read as stating that the probability of $x$ (and $x'$) to
terminate successfully after the sequence $a_0\ldots a_n$ is $p$.

The notation in (\ref{equ:exle:prob}) indicates that there must be a
definition of logic, semantics, axiomatisation paralleling the example
of non-determinstic automata and we will show how to obtain in a
systematic fashion from the functors involved.

\section{Preliminaries}
\label{sec:preliminaries}

\subsection{Monads, Algebras and Coalgebras}

\begin{definition}%[Coalgebras]
  A \emph{coalgebra} for an endofunctor $T$ on a category $\C$ is a
  morphism $\gamma:X\to TX$ for an object $X$ of $\C$, that we call
  $\gamma$'s domain. A $T$-coalgebra morphism between coalgebras
  $\gamma:X\to TX$ and $\delta:Y\to TY$ is a morphism $f:X\to Y$ such
  that $Tf\circ \gamma=\delta\circ f$ commutes. Dually, a $T$-algebra
  is an arrow $\alpha:TX\to X$. 
\end{definition}

\begin{definition}%[Monads] 
  A \emph{monad} on $\Set$ is an endofunctor $B:\Set\to\Set$ with
  natural transformations $\eta:\Id\To B$ and $\mu:BB\To B$ such that
  $\mu\circ\eta_T=\id_T=\mu\circ T\eta$ and $\mu\circ\mu_T=\mu\circ
  T\mu$. If $B$ preserves filtered colimits, the monads is called
  \emph{finitary}.
\end{definition}

\begin{example}[finitary monads]  
  \label{Ex:Monads}
  \begin{enumerate}
  \item The finite powerset $\Pom$, equipped with the singleton map
    $\{(-)\}$ and set-union. 
  \item The bag functor $\B$ takes a set $X$ to the set
    $(\nat^X)_{\omega}$ of its finite multisets, and functions $f:X\to
    Y$ to multiset-functions $\B f:\B X\to\B Y$ taking multisets
    $m\in(\nat^X)_{\omega}$ to $\lambda y.\sum_{x\in
      f^{-1}(y)}m(x)$. 
  \item A (sub-)distribution of a set $X$ is a function $d:X\to[0,1]$
    such that $\sum_{x\in X}d(x)=1$ ($\sum_{x\in X}d(x)\leq 1$). The
    (sub-)distribution functor $\Deq$ ($\Dsub$) takes a set $X$ to the
    set of its (sub-)distributions, and functions $f:X\to Y$ to
    $\lambda m.\lambda y.\sum_{x\in f^{-1}(y)}m(x)$. For the sake of a
    brevity we write both, $\Deq$ and $\Dsub$, as $\D$ when it is
    clear from context, which functor we mean.

    For each $X$ we can define functions
  \[
  \begin{array}{lll}
	\mu_X(d'\in\D^2 X)(x):=\sum_{d\in\D X}d'(d)\cdot d(x)  & ~~ &
	\eta_X(x):=\lambda y.\left\{\begin{array}{ll}1&\mbox{ if }y=x\\0&\mbox{ otherwise}\end{array}\right.
  \end{array}
  \]
  $\mu$ and $\eta$ are transformations natural in $X$ and form with
  $B$ a monad.
\item All of the above are examples of functors which take a set $X$ into
  the set $\srngsom{X}$ of evaluations of $X$ into a semiring $\srngs$
  with finite support, and functions $f:X\to Y$ into functions
  $\srngsom{X}\to\srngsom{Y}$ such that $m\in\srngsom{X}\mapsto\lambda
  y.\sum_{x\in f^{-1}(x)}m(x)$.  For $\Pom$ the semiring is the boolean
  algebra $\struct{\{\top,\bot\},\wedge,\vee,\top,\bot}$, and for $\B$
  the semiring are the natural numbers $\struct{\nat,+,*,0,1}$.
\item If we take for $\srngs$ the real numbers with addition and
  multiplication, then the category of algebras for the semiring monad
  is (isomorphic to) the category of vector-spaces. See
  Semadeni~\cite{semadeni:monads} for more on this perspective. More
  generally, if the semiring does not happen to be a field, the
  category of algebras for the monad is known as the category of
  modules for the semiring.
\item Another example of a semiring monad uses the min-semiring
  $\struct{\nat\cup\{\infty\},\min,+,\infty,0}$ of natural numbers
  augmented with a top element, $\infty$, with an idempotent additive
  operation, $\min$, and a commutative multiplicative operation, $+$,
  such that $\infty$ is neutral wrt $\min$ and $0$ wrt $+$, and $0$
  absorbs wrt $\min$. 
\item Another example of semiring monads can be found in the weighted
  automata of Rutten~\cite{Rutten03Weighted}, where the stream
  behaviour is an instance of the finite trace semantics presented in
  this paper.
  \end{enumerate}
\end{example}

% \begin{example}[non-finitary monads]
%   \begin{enumerate}
%   \item The powerset-functor $\P$.
%   \item The functor $\B_\infty$, which takes a set $X$ to the set
%     $(\nat^X)$ of maps $X\to\nat$. 
%   \item The functor which takes a set $X$ to the set $\rbb^X$ of all
%     maps $f:X\to\rbb$ which have countable support and so that
%     $\sum_{x\in X}|f(x)|$ is strongly convergent. This functor gives
%     free Banach spaces, see Semadeni~\cite{semadeni:monads}.
%   \end{enumerate}
% \end{example}

An (Eilenberg-Moore-) algebra for a monad $B$ is an algebra for the
functor $B$ satisfying additionally $\alpha\circ\mu_X=\alpha\circ
B\alpha$ and $\alpha\circ\eta_X=id_X$.
The algebras for a monad $B$ form a category, the Eilenberg-Moore
category $\BAlg$.  $U:\BAlg\to\C$ maps an algebra to its carrier.  $U$
has a left adjoint $F$ and we write $\eta:\Id\to UF$ and
$\eps:FU\to\Id$ for the unit and counit of the adjunction. Recall that
$UF=B$ and $F\eps_{UX}=\mu_X$.

Each monad admits and initial and a final $B$-algebra, respectively
$\struct{B\emptyset,\mu_{\emptyset}B^2\emptyset\to B\emptyset}$ and
$\struct{\{*\},(\lambda _.*):B\{*\}\to\{*\}}$. Synonymously, we denote
by $1$ a singleton set, when the domain ($\Set$ or $\BAlg$) is clear
from context.

For our definition of generic trace logics, it may be useful when
$\BAlg$ is \emph{closed} in the sense that homsets in $\BAlg$ have
$B$-algebra structure themselves.  Kock~\cite{Kock70MonadsOnSMC}
showed that this is true for commutative monads.

\begin{definition}[Strength Laws]
  A strength law for a monad $B$ is a transformation
  $\st_{X,Y}:=BX\times Y\to B(X\times Y)$ natural in $X$ and $Y$ and
  commutes with the monad's unit and multiplication law such that
  $\st_{X,Y}\circ(\eta_X\times\id_Y)=\eta_{X\times Y}$ and
  $\mu_{X\times Y}\circ B\st_{X,Y}\circ\st_{BX,Y}=\st_{X\times
    Y}\circ(\mu_X\times\id_Y)$.

  A double strength law is a natural transformation given as the
  diagonal $\dst_{X,Y}:BX\times BY\to B(X\times Y)$ of $\mu_{X\times
    Y}\circ B\st_{Y,X}\circ\st_{X,BY}=\mu_{X\times Y}\circ
  B\st_{X,Y}\circ\st_{Y,BX}$, given it exists consistently.

  A monad is commutative if it has a double strength law.
\end{definition}

\noindent The proof of the following can be found in \cite{Kock70MonadsOnSMC}.

\begin{proposition}
  The Eilenberg-Moore category of a commutative monad is closed.
\end{proposition}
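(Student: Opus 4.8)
The plan is to construct, for any two $B$-algebras $\struct{X,\alpha}$ and $\struct{Y,\beta}$, an internal hom object whose underlying set is the homset $\BAlg(\struct{X,\alpha},\struct{Y,\beta})$, carrying a $B$-algebra structure inherited pointwise from $\beta$; the existence of this structure is exactly the assertion that $\BAlg$ is closed. I would begin one level up, on the full function space $Y^X$, which lives in $\Set$. Using the strength I would equip $Y^X$ with a structure map $\wh\beta\colon B(Y^X)\to Y^X$ defined as the transpose of
\[
B(Y^X)\times X \xrightarrow{\st_{Y^X,X}} B(Y^X\times X)\xrightarrow{B\,\ev} BY \xrightarrow{\beta} Y,
\]
so that intuitively $\wh\beta$ forms the $\beta$-weighted combination of a $B$-indexed family of functions, pointwise in the argument. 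The first routine step is to check that $\struct{Y^X,\wh\beta}$ is an Eilenberg--Moore algebra: the unit law $\wh\beta\circ\eta=\id$ follows from $\st\circ(\eta\times\id)=\eta$, and the multiplication law $\wh\beta\circ\mu=\wh\beta\circ B\wh\beta$ follows from the compatibility of $\st$ with $\mu$. Neither of these needs commutativity.

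The internal hom is then the subset $\BAlg(\struct{X,\alpha},\struct{Y,\beta})\subseteq Y^X$ of those functions that are algebra homomorphisms, and the crux of the whole proof is to show that this subset is a subalgebra of $\struct{Y^X,\wh\beta}$ --- i.e. that $\wh\beta$ restricts to it. Concretely, given $m\in B(Y^X)$ all of whose ``components'' are homomorphisms, I must verify that the single function $\wh\beta(m)\colon X\to Y$ again satisfies $\wh\beta(m)\circ\alpha=\beta\circ B\,\wh\beta(m)$ as maps $BX\to Y$. This is where commutativity is indispensable: expanding each side with the strength, the left side threads the structure map $\alpha$ through before forming the $\beta$-combination, while the right side forms the combination first and then applies $\beta$; the two agree precisely because the ``outer'' $B$ coming from $m$ and the ``inner'' $B$ coming from $\alpha,\beta$ may be interchanged, which is exactly the Fubini-type identity $\mu\circ B\st_{Y,X}\circ\st_{X,BY}=\mu\circ B\st_{X,Y}\circ\st_{Y,BX}$ defining the double strength $\dst$. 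I expect this interchange to be the main obstacle, and I would organise the computation so that the commutativity hypothesis is invoked at exactly one clearly identified diagram.

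Having endowed $\BAlg(\struct{X,\alpha},\struct{Y,\beta})$ with a $B$-algebra structure, the remaining work is to confirm it deserves the name ``internal hom''. For this I would introduce the tensor product $\struct{X,\alpha}\otimes\struct{Y,\beta}$ as the (reflexive) coequalizer presenting $B$-bilinear maps out of $X\times Y$, and then establish a natural bijection
\[
\BAlg\big(\struct{X,\alpha}\otimes\struct{Y,\beta},\,\struct{Z,\gamma}\big)\;\cong\;\BAlg\big(\struct{X,\alpha},\,[\struct{Y,\beta},\struct{Z,\gamma}]\big)
\]
by checking that both sides are naturally identified with $B$-bilinear maps $X\times Y\to Z$. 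The symmetry of $\otimes$, needed to upgrade this to a genuine symmetric monoidal closed structure, again rests on commutativity through $\dst$. Since the adjunction bookkeeping is the content of Kock's theorem, I would present the function-space structure and the interchange lemma explicitly and cite \cite{Kock70MonadsOnSMC} for the remaining coherence verifications.
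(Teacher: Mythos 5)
Your proof is correct and is essentially the argument the paper delegates to Kock \cite{Kock70MonadsOnSMC}: the strength-induced pointwise algebra $\wh\beta$ on $Y^X$, with commutativity invoked exactly once (the interchange/Fubini identity) to show that the set of algebra homomorphisms is a subalgebra --- equivalently, that $f\mapsto \beta\circ Bf$ is a homomorphism $Y^X\to Y^{BX}$, so the equalizer with $f\mapsto f\circ\alpha$ lifts to $\BAlg$ --- is precisely Kock's construction specialised to $\Set$. Note that for the paper's notion of closedness (homsets in $\BAlg$ themselves carry $B$-algebra structure) your first two steps already suffice; the tensor product and the tensor-hom adjunction in your last paragraph are extra, and are indeed the part most reasonably left to the cited coherence results.
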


\subsection{The Kleisli Construction and Functor Liftings}

\begin{definition}[Kleisli-Categories]
  \label{D:KleisliCats}
  The Kleisli-category $\Kl B $ of a monad $B$ on $\C$ has as objects
  the objects of $\C$ and arrows $f:X\to Y$ are the arrows $f:X\to BY$
  in $\C$. The identity is given by $\eta:X\to BX$ and composition of
  $f:X\to Y$ and $g:Y\to Z$ in $\Kl B $ is given by $g\circ
  f:=\mu_Z\circ Bg\circ f$.

  The adjunction $F'\ladj U':\C\to\Kl B $ is defined such that for
  all sets $X$, $F'X:=X$, all functions $f:X\to Y$ in $\Set$,
  $F'f:=\eta_Y\circ f$, and for all objects $X$ in $\Kl B $, $U'X:=BX$
  and for all morphisms $f:X\to Y$, $U'f:=\mu_Y\circ Bf$.
\end{definition}

\begin{example}
  \begin{enumerate}
  \item The Kleisli-category for the powerset monad $\P$ is $\Rel$,
    the category of sets as objects and relations as morphisms.
    \item The Kleisli-category for the semiring monad $\srngsom{(-)}$ is the category of free (left) modules for the semiring $\srngs$.
  \end{enumerate}
\end{example}

A coalgebra $\gamma:X\to BTX$ in $\Set$ is a morphisms $X\to TX$ in
$\Kl B $. In order to exhibit $\gamma$ as a coalgebra in $\Kl B $ and
to have coalgebra morphisms, one defines the lifting of
$\Set$-functors $T$ to $\Kl B $. The lifted functor $\olT$ makes
$FT=\olT F$ commute.  The existence of the functor lifting is
equivalent to the existence of a distributive law.

\begin{definition}[Distributive Laws]
  A distributive law for a monad $B$ and a functor $T$ is a natural
  transformation $\pi:TB\To BT$ such that $\pi\circ T\eta=\eta_T$ and
  $\pi\circ T\mu=\mu_T\circ B\pi\circ\pi_B$ commute.
\end{definition}

\begin{example}
  \label{Ex:DistrLaws}
  Let $T(-):=\{*\}+\Act\times (-)$ be a $\Set$-functor for a fixed set
  $\Act$. With each of the monads in Example~\ref{Ex:Monads} $T$ has a
  distributive law.
  \begin{enumerate}
    \item $\pi:T\P\to \P T$: $\pi_X(*):=\{*\}$, $\pi_X(a,Y\subseteq X):=\{(a,x)\mid x\in Y\}$.
    \item $\pi:T\B\to\B T$: $\pi_X(*):=\eta_{\{*\}+\Act\times X}(*)$, and
      $\pi_X(a,m)(a,x):=\{(a,x)\mapsto m(x),(b,x)\mapsto 0,*\mapsto 0\mid a\in\Act,b\in\Act,b\neq a,x\in X\}$
    \item $\pi:T\D\to\D T$: $\pi_X(*):=\eta_{\{*\}+\Act\times X}(*)$, and
      $\pi_X(a,d):=\{(a,x)\mapsto d(x),(b,x)\mapsto 0,*\mapsto 0\mid a\in\Act,b\in\Act,b\neq a,x\in X\}$ where $\D\in\{\Dsub,\Deq\}$
  \end{enumerate}
\end{example}

\begin{definition}[Functor Lifting by Distributive Law]
  Given a distributive law $\pi:TB\to BT$ we can define $\olT$ on
  objects $\olT X:=\T X$ and on morphisms $\olT(f:X\to
  Y):=\pi_Y\circ\T f$
\end{definition}

There is a full and faithful functor $K:\Kl B \to\BAlg$ mapping $X$ to
the free algebra over $X$, see~\cite{MacLane98Categories}. In other
words, we can think of $\Kl B $ as the full subcategory of $\BAlg$
consisiting of the free algebras.

\section{Coalgebraic Logic for Trace Semantics}
\label{S:TraceLog}

In this section we show how to set up trace logics in a coalgebraic
framework. But first we review some basic of coalgebraic logic (more
can be found in \cite{kurz:sigact06}) and the fundamentals of generic
trace semantics \cite{Jacobs04TraceSemantics}.

\subsection{A Brief Review of Logics for $T$-Bisimilarity }
\label{S:CoalgLogic}

Suppose we are looking for a logic for $T$-coalgebras built upon
classical propositional logic. Such a logic would be based on Boolean
algebras which precisely capture the axioms of propositional logic.
Then, in the same way as $T$ is a functor $\Set\to\Set$ on the models
(coalgebras) side, the logic will contain modalities given in terms of 
a functor $L:\BA\to\BA$ on the category $\BA$ of Boolean algebra. 
The situation is depicted in

\begin{equation}\label{E:Duality}
  \xymatrix{
    \Set\ar@/^.5pc/[rr]^{Q}\POS!R(-.7),\ar@(ul,dl)_{T} & \bot
    & \op\BA\ar@/^.5pc/[ll]^{S}\POS!R(.7),\ar@(ur,dr)^{L}
  }
\end{equation}

$Q$ contravariantly takes sets $X$ to their powersets $2^X$ and 
$S$ maps a Boolean algebra to the set of
maximal consistent theories (ultrafilters). For example, if $T=\pcal$
we may define $L$ by saying that $LA$ is the Boolean algebra generated
by $\Diamond\phi, \phi\in A$, modulo the axioms
\begin{equation}\label{equ:K-axioms}
\Diamond 0=0 \quad\quad
\Diamond(\phi\vee\psi)=\Diamond\phi \vee \Diamond\psi
\end{equation} 
Note how this definition of $L$ captures the usual modal logic for
(unlabelled) transition systems. The semantics of the logic is given
by a map
\begin{equation}\label{equ:delta}
  \delta_X:LQX\to QTX
\end{equation}
In the example we define $\delta_X(\Diamond\phi)=\{\psi\in \pcal TX
\mid \phi\cap\psi\not=\emptyset\}$ in order to capture that
$\Diamond\phi$ holds if the set `of successors' $\psi$ satisfies
$\phi\cap\psi\not=\emptyset$. Finally, $(L,\delta)$ gives rise to a
logic in the usual sense as follows. The set of formulas of the logic
is the carrier of the initial $L$-algebra. The semantics of a formula
wrt to a coalgebra $X\to TX$ is given by the unique homomorphism from
the initial $L$-algebra $LI\to I$ as in:
\begin{equation}\label{equ:semantics}
\xymatrix{
LI\ar[d]_{L(\sem{\cdot})}\ar[rr] && I\ar[d]^{\sem{\cdot}}\\
LQX\ar[r]^{\delta_X} & QTX \ar[r]^{Q\gamma} & QX
}
\end{equation}

\begin{theorem}
  Any $(L,\delta)$ with $\delta$ as in (\ref{equ:delta}) gives rise to
  a logic for $T$-coalgebras. The semantics $\sem{\cdot}$ as in
  (\ref{equ:semantics}) is invariant under $T$-bisimilarity. The logic
  is expressive for (finite) coalgebras, if $\delta_X$ is onto for
  (finite) $X$ and the equational logic given by the axioms defining
  $L$ is complete if $\delta_X$ is injective for all $X$. 
\end{theorem}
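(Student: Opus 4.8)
The first two claims are essentially formal. Equipping $QX$ with the $L$-algebra structure $Q\gamma\circ\delta_X:LQX\to QX$ turns the outer square of~(\ref{equ:semantics}) into the assertion that $\sem{\cdot}$ is an $L$-algebra morphism, so initiality of $I$ yields a unique such $\sem{\cdot}$, which is precisely a compositional semantics; this settles the first claim. For invariance I would show that $X\mapsto(QX,\,Q\gamma\circ\delta_X)$ is functorial on coalgebra morphisms, i.e.\ that for $f:(X,\gamma)\to(Y,\gamma')$ the contravariant map $Qf:QY\to QX$ is an $L$-algebra morphism. This is a short chase combining naturality of $\delta$ (the square $\delta_X\circ LQf=QTf\circ\delta_Y$) with the image under $Q$ of the coalgebra law $Tf\circ\gamma=\gamma'\circ f$, namely $Q\gamma\circ QTf=Qf\circ Q\gamma'$; composing gives $a_X\circ LQf=Qf\circ a_Y$ for the two structure maps $a_X,a_Y$. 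Uniqueness from initiality then forces $\sem{\cdot}_X=Qf\circ\sem{\cdot}_Y$, i.e.\ $\sem{\phi}_X=f^{-1}\sem{\phi}_Y$. Applying this to the two projections of a bisimulation, viewed as a coalgebra on a relation $R\subseteq X\times Y$ with $\pi_1,\pi_2$ coalgebra morphisms, gives $x\in\sem{\phi}_X\Leftrightarrow(x,y)\in\sem{\phi}_R\Leftrightarrow y\in\sem{\phi}_Y$ for all $(x,y)\in R$, which is invariance under $T$-bisimilarity.

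For expressivity I would transpose the semantics under the dual adjunction $Q\ladj S$ to the theory map $\tr:X\to SI$ sending a state to the ultrafilter of formulas it satisfies; expressivity says exactly that on finite coalgebras $\tr(x)=\tr(x')$ implies $x,x'$ bisimilar. The plan is to induct along the terminal sequence $1\leftarrow T1\leftarrow T^21\leftarrow\cdots$ of $T$, which on finite coalgebras computes bisimilarity as agreement at all finite stages, in lockstep with the initial sequence $0\to L0\to L^20\to\cdots$ whose colimit is $I$ and whose $n$-th stage furnishes the formulas of modal rank $n$. The inductive claim is that rank-$n$ formulas separate any two states already distinguished at stage $n$; the base case is trivial, and the step is where surjectivity of $\delta_X$ does the work: onto-ness means every predicate in $QTX$ is hit by a modal one-step formula in $LQX$, so a difference produced at stage $n+1$ by applying $T$ to the stage-$n$ behaviour can be pulled back, via $\delta_X$ together with the induction hypothesis, to a separating rank-$(n+1)$ formula. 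I expect this step to be the main obstacle, since it is where one genuinely matches the terminal-sequence combinatorics against the Boolean-plus-modal syntax and uses onto-ness of $\delta_X$; restricting to finite coalgebras is what lets the finite-stage separations assemble into bisimilarity.

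For completeness I would construct the canonical coalgebra on $SI$. Using the mate $\wh\delta:TS\To SL$ of $\delta$ under $Q\ladj S$, together with the inverse of the initial-algebra isomorphism $\iota:LI\to I$, one defines a structure map $\rho:SI\to TSI$; here injectivity of $\delta_X$ for all $X$ is precisely what guarantees that the relevant component of $\wh\delta$ admits the section needed to make $\rho$ well defined (one-step completeness). One then proves a truth lemma, $\sem{\phi}_{(SI,\rho)}=\{\,u\in SI:\phi\in u\,\}$, by induction on $\phi$, the modal step again invoking the defining property of $\wh\delta$. Completeness follows formally: if $\phi=\psi$ is underivable then $\phi,\psi$ are distinct in the Lindenbaum algebra $I$, some ultrafilter separates them, and by the truth lemma the canonical model refutes the equation. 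As with expressivity, the delicate point is checking that injectivity of $\delta_X$ both makes $\rho$ well defined and drives the modal case of the truth lemma.
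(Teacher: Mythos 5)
Your first two claims are fine: equipping $QX$ with the structure map $Q\gamma\circ\delta_X$, checking via naturality of $\delta$ and the coalgebra-morphism law that $Qf$ is an $L$-algebra morphism, and then invoking uniqueness from initiality is the standard argument, and it is also exactly the shape of the paper's own proof of the trace-semantics analogue, Theorem~\ref{thm:main}. (Note the paper does not prove the present theorem at all; it is a review statement, with the underlying argument due to Kupke--Kurz--Pattinson, so only the second half of your proposal really needs scrutiny.)

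Both halves of the second part have genuine gaps. For expressiveness, the step ``onto-ness means every predicate in $QTX$ is hit by a modal one-step formula in $LQX$'' conflates two different things: surjectivity of $\delta_X$ gives \emph{some} $\Phi\in LQX$ with $\delta_X(\Phi)=W$, but the arguments of $\Phi$ are arbitrary predicates on $X$, not necessarily ones definable by rank-$n$ formulas, so $\Phi$ need not correspond to any formula of the logic. The repair is to pick the separating predicate $W$ pulled back along the finite quotient $q:X\to X/\!\equiv_n$ by rank-$n$ logical equivalence, apply surjectivity of $\delta$ at \emph{that} finite set, and use naturality of $\delta$ to land in the image of $L(\Theta_n)$, where $\Theta_n\subseteq QX$ is the subalgebra of rank-$n$ definable predicates; one then still needs the hidden assumption that $L$ preserves surjections to lift back to actual formulas. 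Your plan instead applies surjectivity at the sets $T^n1$, which may be infinite even when $X$ is finite (e.g.\ $TX=\{*\}+\Act\times X$ with $\Act$ infinite), so the finite-$X$ hypothesis does not cover them. For completeness, the claim that injectivity of $\delta_X$ ``is precisely what guarantees that the relevant component of $\wh\delta$ admits a section'' is false in the $\Set$/$\BA$ dual adjunction: the mate $\wh\delta_I:TSI\to SLI$ reaches only those ultrafilters of $LI$ that are realized by points of $TSI$, and injectivity of $\delta$ gives no control over that. Concretely, take $T=\Pom$ with the axioms \eqref{equ:K-axioms}: every $\delta_X$ is injective and the logic is complete (via the finite model property), yet no coalgebra structure $SI\to\Pom(SI)$ can validate a truth lemma, since the ultrafilter extending $\{\diam{}(\diam{}^n\top\wedge\neg\diam{}^{n+1}\top)\mid n\in\mathbb{N}\}$ demands infinitely many successors. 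So the canonical-model route cannot prove the theorem as stated; the arguments that work are either the Stone-duality one (where surjectivity of the mate \emph{does} follow, by full duality) or the finitary induction matching the initial $L$-sequence against the final $T$-sequence --- which is precisely how the paper proves completeness of its trace logic in Section~\ref{sec:gen-logic}.
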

\noindent Suppose we are given $T$, how can we find a logic
$(L,\delta)$? Two answers:
\begin{remark}\label{rem:generic-logics}
\begin{enumerate}
\item Moss~\cite{moss:cl} takes $LA$ to be the free $\BA$ generated by
  $TU\!A$ where $U\!A$ is the underlying set of $A$. A complete
  calculus has been given in \cite{kkv:aiml08}.
\item The standard modal logic for $T=\pcal$ above arises from
  $LA=QTSA$ on finite $A$ and extending continuously to all of $\BA$
  \cite{kurz-rosi:calco07}. It is always complete.
\end{enumerate}
\end{remark}
Both logics are expressive. A detailed comparison has been given in
\cite{KurzLeal09Equational}.

\subsection{A Brief Review of Finite Trace Semantics}
\label{S:FinTraceSem}

\textbf{The basic construction } Consider a coalgebra $X\to BTX$, the
running example being $B=\pcal$ and $TX=\{*\}+\Act\times X$ as
discussed in Section~\ref{sec:examples}. The set of traces will be
the carrier of the initial $T$-algebra given by the colimit (or union) of the
sequence
\begin{equation}\label{equ:traces}
\xymatrix{
\emptyset\ar@{^(->}[rr]^{\emptyset} &&
T\emptyset\ar@{^(->}[rr]^{\T\emptyset} &&
\T^2\emptyset\ar@{^(->}[r] & 
\cdots  &  T^\omega\emptyset}
\end{equation}
In the example $T^n\emptyset=\{a_1\ldots a_n\mid a_i\in\Act\}$ and
$T^\omega\emptyset=\Act^*$, ie the set of finite words over $\Act$.
The set of traces of length $n$ will be given by a map 
\begin{equation}\label{equ:trace-map}
\tr_n:X\to BT^n\emptyset
\end{equation} In the example, $\tr_n(x)$ is the set of traces of
length $n$ that lead from $x$ to an accepting state. To compute it, we
need the following ingredients.
\begin{assumption}\label{ass:B}\ 
\begin{itemize}
\item a map $\mu_X:BBX\to BX$ (for this we assume that $B$ is a monad)
\item a map $\pi_X:TBX\to BTX$ (for this we assume that $\pi$ is a distributive
  law)
\item an algebra morphism $e:A\to F\emptyset$ from any $B$-algebra $A$
  into $F\emptyset$.\footnote{This means that we assume from hereon
    $B\emptyset\not=\emptyset$. Also note that in all our examples $B$
    is a commutative monad, hence $B\emptyset\not=\emptyset$ implies
    $B\emptyset=1$, so that $F\emptyset$ is the final algebra.}
\end{itemize}
\end{assumption}

\noindent The maps $\tr_n$ then arise from taking $n$ steps of
$\gamma$, eg in the case $n=2$, as
\begin{equation*}
  \xymatrix{
    X \ar[r]^{\gamma\ \ } & BTX \ar[r]^{BT\gamma\ \ } & BTBTX\ar[r]^{BTBTe} & 
    BTBTB\emptyset\ar[r]^{p} & BBBTT\ar[r]^{m}\emptyset  & BT^2\emptyset}
\end{equation*}
($p$ stands for 3 applications of $\pi$ and $m$ for 2 applications of
$\mu$.)

\begin{definition}\label{def:tr-sem}
  Two states $x,y\in X$ of a coalgebra $X\to BTX$ are trace equivalent
  if $\tr_n(x)=\tr_n(y)$ for all $n<\omega$.
\end{definition}

For the purposed of the current paper, we consider this the essence of
the trace semantics of
\cite{HasuoJacobsSokolova06GenericTraceTheory}. But
\cite{HasuoJacobsSokolova06GenericTraceTheory} do much more and, in
particular, they show that under additional assumptions the trace
semantics can be given by a final coalgebra in the Kleisli category.

\medskip\noindent\textbf{Trace semantics in the Kleisli category }
\cite{HasuoJacobsSokolova06GenericTraceTheory} show not only that the
ingredients of a monad $B$ and a distributive law $TB\to BT$ give rise
to trace semantics, they also show that it can be elegantly formulated
in the so-called Kleisli category of the monad $B$ (see
Section~\ref{sec:preliminaries}). The objects in the Kleisli category
are the same as in $\Set$, but arrows $X\to Y$ in $\Kl B $ are maps
$X\to BY$ in $\Set$. In case of the powerset functor $B=\pcal$,
$\Kl B $ is the category of sets with relations as arrows.

The distributive law $TB\to BT$ gives rise to a lifting of
$T:\Set\to\Set$ to $\olT:\Kl B \to\Kl B $. The definition of $\tr_n$
can then be defined inductively as
\begin{equation}\label{equ:def-tr}
\tr_{n+1}=\olT(\tr_n)\circ\gamma
\end{equation} 
where we assume a morphism $\tr_0:X\to 0$ in the base case. The
following diagram illustrates the above definition

\begin{equation}
\label{equ:traces-in-Kl}
\xymatrix{
& 
X\ar[r]^{\gamma}\ar@/_.5pc/[ld]_{\tr_0}\ar@/^1pc/[rd]_{\tr_n}\ar@/^1pc/[rrd]_{\tr_{n+1}}& 
\olT X\ar@/^1pc/[rd]^{\olT\tr_n} &&\\
\emptyset &
\cdots &
\olT^n\emptyset &
\olT^{n+1}\emptyset &
\cdots\\
}
\end{equation}

\hide{
in $\Kl B $
\begin{equation}\label{equ:traces-in-Kl}
  \xymatrix{
    \emptyset\ar[r]^{\emptyset} &
    \olT\emptyset\ar[r]^{\T\emptyset} &
    \olT\olT\emptyset & 
    \cdots & \olT\olT^n\emptyset & \cdots \\
    X \ar[r]^{\gamma} \ar[u]^{\tr_0} & \olT X \ar[r]^{\olT \tr_0}
    \ar[u]^{\olT \tr_0}  & \olT\olT X \ar[u]^{\olT\olT\tr_0}& 
      \cdots & \olT^n X %\ar[r]^{\olT^n \gamma}
    \ar[u]^{} & \cdots
  }
\end{equation}
}

Furthermore, under conditions for which we refer to
\cite{HasuoJacobsSokolova06GenericTraceTheory}, the final
$\olT$-coalgebra $Z$ exists.\footnote{Moreover,
  \cite{HasuoJacobsSokolova06GenericTraceTheory} prove the beautiful
  result that show that the final $\olT$-coalgebra is given by the
  initial $T$-algebra with the carrier $T^\omega\emptyset$ as in
  (\ref{equ:traces}).}  Therefore, with the notation of
Definition~\ref{def:tr-sem}, there is a map $\tr:X\to BZ$ with the
property $$\tr(x)=\tr(y) \Leftrightarrow \tr_n(x)=\tr_n(y)$$ for all
$n<\omega$. Thus, the trace semantics via the final coalgebra (if it
exists) in the Kleisli-category is equivalent to the one of
Definition~\ref{def:tr-sem}. The advantage of the trace semantics via
the final coalgebra in the Kleisli-category is that it gives a
coinductive account of trace semantics. The disadvantage is that it
excludes some natural examples such as \emph{finite} powersets or
multisets. The next section shows that these examples can be treated
via final coalgebras if we move from the Kleisli-category to the
category of algebras for the monad.

\subsection{Trace Semantics in the Eilenberg-Moore
  Category}\label{sec:EM}

In this section we propose to move the trace semantics from the
Kleisli-category $\Kl B $ to the category $\BAlg$ of
Eilenberg-Moore-algebras. There are at least two reasons why this of
interest. The first is that the duality we will exploit for the logic
takes place in $\BAlg$. The second is that, in general, the limit of
Diagram~(\ref{equ:traces-in-Kl}) is not a free $B$-algebra and hence
not in $\Kl B $, but it always exists in $\BAlg$.  

Let $K$ denote the functor which embeds $\Kl B$ into $\BAlg$. Our
first task is to extend $\olT:\Kl B\to\Kl B$ to $\wtT:\BAlg\to\BAlg$
so that $\wtT K\cong K\ol T$ (hence $\wtT F\cong F T$).

\begin{equation}
  \xymatrix{
    \Klom B\ar@/_1pc/@{^(->}[rr]_{J}\ar@(ul,ur)^{\olT'} &
    \Kl B\ar@/^.5pc/@{^(->}[r]^{K}\ar@(ul,ur)^{\olT} &
    \BAlg\ar@(ul,ur)^{\wtT}
  }
\end{equation}
On the full subcategory of free algebras we can define $\wtT FX=K\olT
X=FTX$. To extend this to arbitrary algebras $A$ recall first that any
$A\in\BAlg$ is a coequaliser of $FU\eps_A,\eps_{FUA}:FUFUA\to FUA$. We
then define $\wtT A$ as the coequaliser of $\wtT FU\eps_A$ and $\wtT
\eps_{FUA}$. It can be shown that $\wtT$ is the left Kan-extension of
$K\olT$ along $K$. 

\begin{example}\label{exle:wtT}
  Let $B=\Pom$ and $T=\{\tick\}+\Act\times\Id$. Then $\wtT A\cong
  F1+\Act\cdot A$. Indeed, by definition, we have $\wtT FX= FTX \cong
  F1+\Act\cdot FX$. Now the claim follows from the fact that the
  functor $F1+\Act\cdot\Id$, being a coproduct, preserves
  coequalisers.
\end{example}

It is convenient for us to make the following assumptions.

\begin{assumption}\label{ass:fin}
  $\B:\Set\to\Set$ is a finitary commutative monad with
  $B\emptyset\not=\emptyset$ and $T:\Set\to\Set$ is a finitary functor
  with a distributive law $TB\to BT$.
\end{assumption}

\begin{remark}\label{rmk:wtT-fb} 
  If $\B$ and $T$ are finitary, then $\wtT$ is determined by finitely
  generated free algebras, or, in other words, $\wtT$ preserves sifted
  (hence filtered) colimits \cite{arv} and falls within the framework
  considered in \cite{kurz-rosi:strcompl,vele-kurz:fb}. For a functor
  $H:\acal\to\acal$ on a finitary algebraic category $\acal$ to be
  strongly finitary means that $H$ is determined by its action on
  finitely generated free algebras. More formally, $H$ is a left
  Kan-extension of $HK$ along $K$ where $K$ is the inclusion
  $\acal_0\to\acal$ of the full subcategory $\acal_0$ of finitely
  generated free algebras. A pleasant consequence is that all concrete
  calculations of some $HA$ can be restricted to the case $A=Fn$,
  where $F$ is the left adjoint of the forgetful functor
  $\acal\to\Set$ and $n$ is finite. This will be exploited in the
  following for $\acal=\BAlg$. Other consequences of our assumption
  then are:
\begin{itemize}
\item $F\emptyset$ is the initial and final object of $\BAlg$.
\item The final $\wtT$ sequence converges after $\omega$ steps.
\end{itemize}
\end{remark}

In a second step, we can now map a coalgebra $\gamma:X\to BTX$ (ie
$\gamma:X\to\olT X$) to $\wtgamma:FX\to \wtT FX$ (ie $\wtgamma:KX\to
K\olT X$). Thus $\wtgamma$ is a coalgebra for a functor
$\wtT:\BAlg\to\BAlg$. Moreover we observe that we can factor
$\tr_n:X\to BT^n\emptyset$ from Diagram~(\ref{equ:traces-in-Kl}) as
$$ \tr_n : X\to BX\cong UFX
\stackrel{U\wttr_n}{\too}\wtT^nF\emptyset\cong BT^n\emptyset$$ where
we define $\wttr_0$ via $e$ as in Assumption~\ref{ass:B} and
$\wttr_{n+1}=\wtT\wttr_n\circ\wtgamma$. Let us summarise this in a
definition and a proposition.

\begin{definition}
  Recall Assumption~\ref{ass:fin}. For any coalgebra $\alpha:A\to\wtT
  A$ we define the trace semantics as follows. First, $\wttr:A\to
  F\emptyset$ is given by finality; then, inductively
  $\wttr_{n+1}=\wtT\wttr_n\circ\wtgamma$. This defines a cone on the
  final $\wtT$-sequence so we can define the trace semantics
  $\wttr:A\to Z$, where $Z\to\wtT Z$ is the final
  $\wtT$-coalgebra. For a coalgebra $\gamma:X\to BTX$ we define
  $\tr:X\to UZ$ as $U\wttr\circ\eta_X$, where $\wttr$ is the trace
  semantics of $\wtgamma:FX\to\wtT FX$.
\end{definition}

To emphasise that this definition agrees with the one of the previous
subsection we state

\begin{proposition}
  Consider $\gamma:X\to BTX$ and $\wtgamma:FX\to FTX= \wtT
  FX$. Then $U\wttr_n\circ \eta_X=\tr_n$. 
\end{proposition}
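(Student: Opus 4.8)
The plan is to show that the entire Eilenberg--Moore trace construction is the image under the comparison functor $K$ of the Kleisli construction, and that applying $U$ and then precomposing with $\eta_X$ exactly undoes the passage to free algebras. Concretely, I would prove by induction on $n$ that
\[
\wttr_n = K\tr_n \quad\text{as morphisms } FX\to FT^n\emptyset \text{ in } \BAlg,
\]
and then invoke a single general identity to conclude $U\wttr_n\circ\eta_X = UK\tr_n\circ\eta_X = \tr_n$. Here the types match on the nose because, by the defining equation of $\wtT$ on free algebras, $\wtT^nF\emptyset = FT^n\emptyset = K\olT^n\emptyset$.

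The general identity I need is that for every Kleisli arrow $f:X\to Y$ one has $UKf\circ\eta_X = f$ as a function $X\to BY$. This is immediate: the comparison functor sends $f$ to $Kf=\mu_Y\circ Bf$, so $UKf$ agrees with $U'f=\mu_Y\circ Bf$ of Definition~\ref{D:KleisliCats}, and then $UKf\circ\eta_X=\mu_Y\circ Bf\circ\eta_X=\mu_Y\circ\eta_{BY}\circ f=f$ by naturality of $\eta$ and the monad unit law. I would also record at the outset the two facts that drive the induction: first, $\wtgamma=K\gamma$, which is precisely how $\wtgamma$ was obtained from $\gamma$; and second, that $F\emptyset$ is final in $\BAlg$ by Remark~\ref{rmk:wtT-fb}.

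For the base case, both $\wttr_0$ and $K\tr_0$ are morphisms $FX\to F\emptyset$ into the final object $F\emptyset$, hence equal by uniqueness: $\tr_0$ is the unique Kleisli arrow $X\to\emptyset$ (since $B\emptyset=1$), so $K\tr_0$ is the unique algebra map into $F\emptyset$, which is exactly $\wttr_0$ defined by finality. For the inductive step, assuming $\wttr_n=K\tr_n$ I compute, using functoriality of $K$, the inductive hypothesis, $\wtgamma=K\gamma$, the relation $\wtT K=K\olT$, and the defining recurrence~(\ref{equ:def-tr}),
\begin{align*}
\wttr_{n+1}=\wtT\wttr_n\circ\wtgamma &=\wtT K\tr_n\circ K\gamma=K\olT\tr_n\circ K\gamma\\
&=K(\olT\tr_n\circ\gamma)=K\tr_{n+1}.
\end{align*}
With $\wttr_n=K\tr_n$ established, the general identity applied to the Kleisli arrow $\tr_n$ yields $U\wttr_n\circ\eta_X=UK\tr_n\circ\eta_X=\tr_n$.

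The main obstacle is the precise status of the relation $\wtT K\cong K\olT$. Since $\wtT$ is the left Kan extension of $K\olT$ along the fully faithful $K$, this isomorphism holds naturally, and on objects it is the strict equality $\wtT^nF\emptyset=FT^n\emptyset=K\olT^n\emptyset$ that makes the types of $\wttr_n$ and $K\tr_n$ literally coincide. I would check that the comparison isomorphism can be taken with identity components on free algebras, so that the step $\wtT K\tr_n=K\olT\tr_n$ in the display is a genuine equality rather than an equality only up to isomorphism; if one insists on tracking the Kan-extension unit explicitly, the coherence isomorphisms must instead be transported through each stage of the induction, which is the only bookkeeping that requires care. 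Everything else is a routine consequence of functoriality and the monad laws.
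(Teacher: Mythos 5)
Your proof is correct and takes essentially the route the paper intends: the paper states this proposition without a separate proof, offering only the preceding observation that $\tr_n$ factors as $X\to UFX \stackrel{U\wttr_n}{\too} \wtT^nF\emptyset$ and the later parenthetical remark that $\wtT$ extends the Kleisli lifting $\olT$ to all algebras --- which is exactly what your induction $\wttr_n=K\tr_n$ together with the identity $UKf\circ\eta_X=f$ makes precise. Your caution about the strictness of $\wtT K\cong K\olT$ on free algebras is well placed but unproblematic here, since $K$ is full and faithful and the paper defines $\wtT$ on the full subcategory of free algebras exactly as $K\olT$, so the components can indeed be taken to be identities.
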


\noindent Thus, $Z$ and $\wttr$ and $\tr$ are just a convenient way to
talk about the maps $\tr_n$ for all $n\in\mathbb{N}$
simultaneously. In particular, we have now again a coinductive account
of trace semantics. This technique will give, for example, a short and
conceptual proof of Theorem~\ref{thm:main}. Under
Assumption~\ref{ass:fin}, and if the final $\olT$-coalgebra of
\cite{HasuoJacobsSokolova06GenericTraceTheory} exists, then both the
trace semantics in $\Kl B $ and the trace semantics in $\BAlg$ are
equivalent as both boil down to Definition~\ref{def:tr-sem}. (Of
course, this is due to the fact that the definition of $\wtT$ extends
to all algebras the lifting $\olT$ of $T$ to $Kl(B)$.)

\begin{remark}\label{rmk:fin-sem-BAlg} If $B\emptyset\not=0$ then the sequence
  $(\wtT^nF\emptyset)_{n<\omega}$ is the finitary part of the final
  $\wtT$-sequence in $\BAlg$. Moreover, it follows from
  Remark~\ref{rmk:wtT-fb} that if $B$ is finitary, then the
  $\omega$-limit $(\wtT^\omega F\emptyset)$ of the final sequence is
  the final $\wtT$-coalgebra. To summarise, in addition to the
  explanation of trace semantics as a final semantics in the
  Kleisli-category as in
  \cite{HasuoJacobsSokolova06GenericTraceTheory}, we can also give a
  final semantics in the Eilenberg-Moore category. These two
  approaches are slightly different, for example, the approach of
  \cite{HasuoJacobsSokolova06GenericTraceTheory} works for $B=\P$ but
  not for $B=\Pom$, whereas for us it is more natural to work with
  $B=\Pom$ as we then have algebras with a finitary signature.
\end{remark}

\begin{example}\label{exle:B=Pom:2}
  Consider $B=\Pom$, $T=\{\tick\}+\Act\times\Id$. Then
  $\wtT(FX)=F\{\tick\} + \Act\cdot FX$. We can identify $F\emptyset$
  with $\{\emptyset\}$ and $\wtT^n(F\emptyset)$ with $\Pom(1+\Act
  +\ldots \Act^n)$. Thus, elements of $\wtT^n(F\emptyset)$ are finite
  sets of finite words $\langle a_1\ldots a_i\rangle$, $i\le n$. As
  $F\emptyset$ is initial and final in $\BAlg$, the
  $\wtT^n(F\emptyset)$ are part of the initial and of the final
  $\wtT$-sequence. The projections
  $p^{n+1}_n:\wtT^{n+1}(F\emptyset)\to\wtT^{n}(F\emptyset)$ are
  finite-union-preserving maps determined by acting as the identity on
  singletons $\{\langle a_1\ldots a_i\rangle\}$ for $i\le n$ and
  sending $\{\langle a_1\ldots a_{n+1}\rangle\}$ to $\emptyset$. The
  embeddings $e^n_{n+1}:\wtT^n(F\emptyset)\to\wtT^{n+1}(F\emptyset)$
  are given by the obvious inclusions. Note that $p^{n+1}_n\circ
  e^n_{n+1}= \id_n$. The colimit of the initial $\wtT$-sequence
  $(e^n_{n+1})_{n<\omega}$ is given by all finite subsets of
  $\Act^*=\coprod_{n<\omega}\Act^n$. The limit of the final
  $\wtT$-sequence $(e^n_{n+1})_{n<\omega}$ is given by all subsets of
  $\Act^*$. Note that although all approximants $\wtT^n(F\emptyset)$
  are free algebras, the limit $\P(\Act^*)$ is not free in $\BAlg$ and
  hence does not appear in $\Kl(\Pom)$.
\end{example}

\subsection{Logics for Finite $B$-Traces }
\label{MT-logic}

We develop logics for $(B,T)$-coalgebras with a semantic invariant 
under trace equivalence in analogy to coalgebraic modal logic
for $T$-bisimulation.

Firstly we need a category carrying our logics. We have a number of
possible replacements for $\BA$ in Diagram~(\ref{E:Duality}):
distributive lattices for positive logic, Heyting algebras for
intuitionistic logic, complete atomic Boolean algebras for infinitary
logic. The minimal choice (without propositional operators) is $\Set$
itself as used for example by Klin in~\cite{klin:lics07}.

\begin{equation}\label{E:setsetop}
  \xymatrix{
  \Set\ar@/^.5pc/[rr]^{2^{(-)}} &
  \bot & 
  \Set^{op}\ar@/^.5pc/[ll]^{2^{(-)}}
  }
\end{equation}

In the above situation, $2$ takes the role of a schizophrenic
object. Analogously we may choose a $B$-algebra $\Omega$ to replace
$2$. In most examples we have considered, $F1$ is a suitable
choice, but for the moment we do not need to fix a choice.

\begin{notation}
  If $B$ is a commutative monad, we write $Q$ for the contravariant
  endofunctor $[-,\Omega]$ on $\BAlg$ where $\Omega$ is for now an
  arbitrary but fixed object of `truth values'.
\begin{equation}\label{E:balgbalgop}
  \xymatrix{
  \BAlg\ar@/^.5pc/[rr]^{Q=[-,\Omega]}	
  &\bot& 
  \op\BAlg\ar@/^.5pc/[ll]^{Q=[-,\Omega]}
  }
\end{equation}
$\QSet$ is the contravariant endofunctor
$U[F-,\Omega]=Set(-,U\Omega)$. We have $UQA=\QSet UA$.
\end{notation}

\begin{example}\label{exle:one}
%  \begin{enumerate}
%  \item 
  When $B=\Pom$, $\BAlg=\SLat$ is the category of (join)
  semi-lattices. For $\Omega$ we choose the two-element semi-lattice
  $F1=\twobb$, so that $[-,F1]$ takes a semi-lattice $A$ to the set of
  `prime filters' over $A$. For future calculations, we record some
  facts about semi-lattices. First, for finite $A$, there are
  order-reversing bijections
  \begin{equation}\label{exp-log}
\xymatrix{
A \ar@/^/[rr]^{\exp} & & [A,\twobb] \ar@/^/[ll]^{\log}
}
\end{equation}
given by $\exp(a)=\lambda b. \neg(b\le a)$ and
$\log(\phi)=\bigvee\phi^-$ where $\neg:\twobb\to\twobb$ is negation
and $\phi^- = \{a\in A\mid \phi(a)=0\}$. Another description of $\log$
goes as follows. Since $\phi$ preserves joins it has a right adjoint
$\phi^\sharp$ and $\log(\phi)=\phi^\sharp(0)$. Second, if $A=FX$ with
$X$ not necessarily finite, we have  the bijection
  \begin{equation}\label{equ:UQFX}
  UQFX=U[FX,\bbtwo]\cong\Set(X,2)\cong \P X
\end{equation}
which lifts to a semi-lattice isomorphism
\begin{equation}\label{equ:QFX}
  QFX\cong (\P X, \emptyset, \cup)
\end{equation}
mapping $\phi\in[FX,\twobb]$ to $\{x\in X\mid \phi(\{x\}) = 1\}$ and
$S\subseteq X$ to the unique $\phi$ with $\phi(x)=1\Leftrightarrow
x\in X$, or, equivalently, to $\lambda S'\in FX\,.\, S\cap
S'\not=\emptyset$ (where we use \eqref{equ:UQFX} to identify $S'$ with
a subset of $X$).  Taking now $X=n$ finite again, we
obtain
  \begin{equation}\label{setify0}
QFn\cong Fn.
\end{equation}
In this case it is more convenient to use $\exp$ and $\log$ to denote
the order-preserving bijections
  \begin{equation}\label{exp-log}
\xymatrix{
Fn \ar@/^/[rr]^{\exp} & & [Fn,\twobb] \ar@/^/[ll]^{\log}
}
\end{equation}
given by $\log(\phi)=\{i\in n\mid \phi(\{i\})=1\}$ and
$\exp(S)=\lambda S'. S\cap S'\not=\emptyset$ (where again we identify
elements $S, S'$ of $Fn$ with subsets $S,S'\subseteq n$).

One can check that $\exp(\exp(a)) = \lambda\phi.\phi(a)$. It follows
that $\exp\circ\exp: \Id\to QQ$ is the unit of the adjunction
\eqref{E:balgbalgop}, and, moreover, that the unit is an isomorphism
on finite semi-lattices.\footnote{This also follows from the fact that
  the adjunction \eqref{E:balgbalgop} restricts to an equivalence on
  finite semi-lattices \cite{johnstone:stone-spaces}.} In case of
$Fn\to QQFn$ we have for $S\subseteq n$ that $\exp(\exp(S))(\phi) =
\log(\phi)\cap S\not=\emptyset$. The inverse $QQFn\to Fn$ of $Fn\to
QQFn$ maps $u:[Fn,\twobb]\to\twobb$ to $\log(\log(u))=n\setminus
\{i\in n\mid \exists\phi\,.\,u(\phi)=0 \,\&\, \phi(\{i\})=1\}$.

We will also use that for finite semi-lattices coproducts and products
coincide, with 
\begin{equation}
  \label{equ:co-prod}
  \begin{array}{rcl}
    A+B & \to & A\times B\\
    a & \mapsto & (a,0)\\
    b & \mapsto & (0,b)\\
    a\vee b & \mapsfrom & (a,b) 
  \end{array}
\end{equation}
describing the isomorphism. \qed
\end{example} 

In Section~\ref{S:FinTraceSem} we have defined the finite trace
semantics of $\Set$-coalgebras $\gamma:X\to BTX$ as the
final coalgebra semantics of the lifted coalgebra $\gamma:FX\to\wtT FX$
in $\BAlg$.

Secondly we need a functor $L$ providing the modalities for our logics,
as in the following diagram.

\begin{equation}
  \label{E:balgbalgopTL}
  \xymatrix{
    \BAlg\ar@/^.5pc/[rr]^{Q}\POS!R(-.7)\ar@(ul,dl)_{\wtT} 
           &\bot& 
    \op\BAlg\ar@/^.5pc/[ll]^{Q}\POS!R(.7),\ar@(ur,dr)^{L}
  }
\end{equation}

In analogy to Section~\ref{S:CoalgLogic}, we develop finite trace
logics as the initial $L$-algebra $\L:LI\to I$ in $\BAlg$. Note that
under the assumptions of Remark~\ref{rmk:wtT-fb}, we have that $I$ is the
$\omega$-colimit of the initial $L$-sequence:
\begin{equation}
  \xymatrix{
  0\ar[r] & L0\ar[r] & L^20\ar[r] & \cdots
  }
\end{equation}

\begin{definition} 
  A trace logic is given by a functor $L:\BAlg\to\BAlg$ and a natural
  transformation $\delta: LQ\to Q\wtT$. Formulas of the logic are
  given by elements of the initial $L$-algebra. The semantics
  $\sem{\cdot}_\wtgamma$ wrt a $\wtT$-coalgebra $\wtgamma:FX\to\wtT FX$
  is given by initiality as in
\begin{equation}\label{equ:tr-semantics}%\label{equ:sem-B-coalg}
  \xymatrix{
    LI\ar[d]_{L\sem{\cdot}_\wtgamma}\ar[rr] && I\ar[d]^{\sem{\cdot}_\wtgamma}\\
    LQFX\ar[r]^{\delta_{FX}} & Q\wtT FX \ar[r]^{Q\wtgamma} & QFX
  }
\end{equation}
This induces the semantics $\sem{\cdot}_\gamma$ wrt a coalgebra
$\gamma:X\to BTX$ via
\begin{equation}\label{equ:sem-set-coalg}
  \xymatrix{
    UI\ar[r]^{U\sem{\cdot}_\wtgamma\ \ } & UQFX\ar[r]^\cong & \QSet UFX \ar[r]^{\QSet\eta_X} & \QSet X
  }
\end{equation}
\end{definition}

For future reference, we record that the semantics in terms of
$\gamma$ and $\wtgamma$ agree:

\begin{proposition}\label{prop:wtgamma}
  Let $\wtgamma:FX\to \wtT FX$ be the $\wtT$-coalgebra induced by the
  $(B,T)$-coalgebra $\gamma:X\to BTX$, that is,
  $\gamma=U\wtgamma\circ\eta_X$ with $\eta_X:X\to BX$ the unit of the
  monad $B$. Then $\sem{\phi}_\gamma(x)=
  \sem{\phi}_\wtgamma(\eta_X(x))$.
\end{proposition}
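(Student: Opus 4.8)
The plan is to prove the identity by unwinding the definition (\ref{equ:sem-set-coalg}) of $\sem{\cdot}_\gamma$, since $\sem{\cdot}_\gamma$ is defined there directly in terms of $\sem{\cdot}_\wtgamma$; the relation $\gamma=U\wtgamma\circ\eta_X$ only records how $\wtgamma$ arises from $\gamma$ and is not otherwise needed. Concretely, I would track a formula $\phi\in UI$ together with a state $x\in X$ through the composite in (\ref{equ:sem-set-coalg}). To begin, note that $\sem{\cdot}_\wtgamma:I\to QFX$ is a morphism of $\BAlg$, so that $U\sem{\cdot}_\wtgamma$ sends $\phi$ to the underlying function of the homomorphism $\sem{\phi}_\wtgamma:FX\to\Omega$, that is, to a map $BX=UFX\to U\Omega$ regarded as an element of $UQFX$.

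The key step is to identify the composite $\QSet\eta_X\circ({\cong})$ of (\ref{equ:sem-set-coalg}) with the transpose of the adjunction $F\dashv U$. Under the identification $UQFX=U[FX,\Omega]$, the arrow labelled $\cong$ takes a homomorphism $h:FX\to\Omega$ to its underlying function $Uh:BX\to U\Omega$, and $\QSet\eta_X$, being contravariant, precomposes with the unit $\eta_X:X\to BX$. Their composite therefore sends $h$ to $Uh\circ\eta_X:X\to U\Omega$, which is exactly the adjoint transpose of $h$ across $F\dashv U$; in particular this composite is the natural isomorphism $UQFX\cong\QSet X$. Consequently $\sem{\phi}_\gamma=U\sem{\phi}_\wtgamma\circ\eta_X$ as elements of $\QSet X=\Set(X,U\Omega)$.

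It then remains only to evaluate at $x$, giving $\sem{\phi}_\gamma(x)=U\sem{\phi}_\wtgamma(\eta_X(x))=\sem{\phi}_\wtgamma(\eta_X(x))$, where the last equality identifies the homomorphism $\sem{\phi}_\wtgamma$ with its underlying function. This is the claim. The proof is thus essentially a bookkeeping exercise, and the one point that deserves care — the main obstacle, such as it is — is to check that the two arrows $\cong$ and $\QSet\eta_X$ genuinely compose to the adjoint transpose, so that precomposing with $\eta_X$ and then evaluating at $x$ coincides with evaluating the homomorphism $\sem{\phi}_\wtgamma$ at $\eta_X(x)\in BX$; this is precisely where the identity $UQA=\QSet UA$ of the Notation is invoked, specialised to the free algebra $A=FX$.
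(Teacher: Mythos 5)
Your proof is correct and takes essentially the same route as the paper, which states Proposition~\ref{prop:wtgamma} without proof precisely because it is immediate from diagram~(\ref{equ:sem-set-coalg}): unwinding that definition and recognising the composite of the isomorphism $UQFX\cong\QSet UFX$ with $\QSet\eta_X$ as the adjoint transpose across $F\dashv U$ is exactly the intended argument. Your side remark that the hypothesis $\gamma=U\wtgamma\circ\eta_X$ serves only to pin down which $\wtgamma$ the definition of $\sem{\cdot}_\gamma$ refers to is also accurate.
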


\begin{example}\label{exle:two}
  Continuing from Example~\ref{exle:one}, in order to describe the
  logic (\ref{equ:exle1-syntax}), we let $LA$ be the join-semilattice
  which is freely generated by $\tick$ and $\diam{a}\phi$ for
  $a\in\Act$ and $\phi\in A$, quotienting by
  (\ref{equ:exle1-axioms}). To describe $\delta_{FX}$ it is convenient
  to note that $QFX$ can be identified with the set of subsets of $X$
  as in \eqref{equ:QFX}
  and $Q\wtT FX= QFTX$ with the set of subsets of $TX$. It therefore
  makes sense to define
\begin{align*}
  \delta_{FX}:LQFX & \to Q\wtT FX \label{equ:tr-delta-exle1}\\
  \tick & \mapsto \{S\subseteq TX \mid *\in S\}\\
  \diam{a}\phi &\mapsto \{S \subseteq TX \mid \exists x(x\in\phi\
  \&\ (a,x)\in S)\}
\end{align*}
\end{example}

\begin{proposition}\label{prop:exle1}
$(L,\delta)$ of Example~\ref{exle:two}, together with
(\ref{equ:tr-semantics}), describes the same logic as (\ref{equ:exle1-syntax}) in
Section~\ref{sec:examples}.
\end{proposition}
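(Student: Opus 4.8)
The plan is to show that the two descriptions of the logic agree on three levels: syntax, axiomatisation, and semantics. Since the abstract logic of Example~\ref{exle:two} and the concrete logic~(\ref{equ:exle1-syntax}) are both built from the same modal operators, the work is to match them up carefully under the identifications already established in Example~\ref{exle:one}.

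First I would address the syntax and axioms. The concrete logic has formulas generated by $0$, $\tick$, $\vee$, and $\diam{a}(-)$ modulo the laws for falsum and disjunction together with~(\ref{equ:exle1-axioms}). On the abstract side, the carrier of the initial $L$-algebra is, by construction, the join-semilattice freely generated from $\tick$ and the $\diam{a}\phi$ modulo~(\ref{equ:exle1-axioms}). Here the point is that a join-semilattice \emph{is} exactly a carrier equipped with a commutative, associative, idempotent $\vee$ with neutral element $0$; so the semilattice axioms built into $\BAlg$ (with $B=\Pom$, $\BAlg=\SLat$) are precisely the usual laws for $0$ and $\vee$. Thus the free-semilattice-modulo-axioms description of $I$ coincides, as a set of formulas-up-to-provable-equality, with the Lindenbaum algebra of~(\ref{equ:exle1-syntax}). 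I would note in particular that the derived law~(\ref{Eq:Bifurcation}) follows in both presentations from~(\ref{equ:exle1-axioms}), so no discrepancy arises there.

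Next I would verify that the semantics agree. By Proposition~\ref{prop:wtgamma} it suffices to work with $\sem{\cdot}_\wtgamma$ as defined by the initiality diagram~(\ref{equ:tr-semantics}), since the induced $\Set$-semantics~(\ref{equ:sem-set-coalg}) is obtained from it by applying $U$ and precomposing with $\eta_X$, which is exactly the passage $\sem{\phi}_\gamma(x)=\sem{\phi}_\wtgamma(\eta_X(x))$. Under the identification $QFX\cong(\P X,\emptyset,\cup)$ of~(\ref{equ:QFX}) and $Q\wtT FX=QFTX\cong\P(TX)$, I would read off the interpretation of each generator by chasing~(\ref{equ:tr-semantics}): the semantics map $\sem{\cdot}_\wtgamma$ is the unique homomorphism, so its value on a generator is computed by applying $\delta_{FX}$ and then $Q\wtgamma=(\gamma)^{-1}$ (pullback along $\gamma$ under the powerset identification). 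Concretely, $\sem{\tick}_\wtgamma=\gamma^{-1}(\delta_{FX}(\tick))=\{x\mid *\in\gamma(x)\}$, which matches $x\Vdash\tick\Leftrightarrow *\in\gamma(x)$; and $\sem{\diam{a}\phi}_\wtgamma=\gamma^{-1}\{S\mid\exists x'(x'\in\sem{\phi}\ \&\ (a,x')\in S)\}=\{x\mid\exists x'((a,x')\in\gamma(x)\ \&\ x'\in\sem{\phi})\}$, matching~(\ref{equ:exle1-semantics-diam}). The interpretation of $0$ and $\vee$ is forced to be $\emptyset$ and $\cup$ by homomorphy into the semilattice $QFX$, matching the concrete clauses for $0$ and $\vee$.

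The main obstacle I anticipate is bookkeeping rather than conceptual: one must be scrupulous that the isomorphism~(\ref{equ:QFX}) is applied consistently, in particular that an element $\phi\in QFX$ is tracked both as the subset $\{x\mid\phi(\{x\})=1\}$ and, equivalently, as $\lambda S'.\,S\cap S'\neq\emptyset$, so that the pullback description of $Q\wtgamma$ really does compute the preimage and the existential quantifier in $\delta_{FX}(\diam{a}\phi)$ lines up with~(\ref{equ:exle1-semantics-diam}). Once these identifications are pinned down, the agreement on generators extends to all formulas because both $\sem{\cdot}_\wtgamma$ and the concrete semantics are determined by their values on generators together with compatibility with $0$ and $\vee$; and the agreement respects the axioms by soundness of~(\ref{equ:exle1-axioms}) on both sides. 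I would close by remarking that the two logics therefore coincide as $(L,\delta)$-logics, completing the proof.
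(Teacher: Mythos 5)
Your proposal is correct and takes essentially the same route as the paper's own proof: the paper verifies the claim by exactly the chain of equivalences you give for $\diam{a}\phi$ (concrete clause (\ref{equ:exle1-semantics-diam}), then the definition of $\delta_{FX}$, then the definition of $Q$ as precomposition/preimage under the identification (\ref{equ:QFX}), then initiality (\ref{equ:tr-semantics})), merely presenting it for that one modality with ``for example'' where you also spell out $\tick$, $0$, $\vee$, the syntactic/axiomatic match, and the bridge via Proposition~\ref{prop:wtgamma}. Your flagged bookkeeping point about tracking $\phi\in QFX$ both as a subset and as $\lambda S'.\,S\cap S'\neq\emptyset$ is exactly the identification the paper uses implicitly, so no gap arises.
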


\begin{proof}
For example, we calculate
$x\models\diam{a}\phi \ \Leftrightarrow \ %
\gamma(x)\in \{S \subseteq TX \mid \exists x'(x'\in\phi\ \&\ (a,x')\in
S)\}\Leftrightarrow \ %
\gamma(x)\in \delta_{FX}(\diam{a}\phi)\Leftrightarrow \ %
x\in QF\gamma(\delta_{FX}(\diam{a}\phi))\ \Leftrightarrow \ %
x\in\sem{\diam{a}\phi}$ where we use, respectively,
(\ref{equ:exle1-semantics-diam}), the definition of $\delta$, the
definition of $Q$, and (\ref{equ:tr-semantics}).
\end{proof}

\begin{theorem}\label{thm:main}
  Consider a functor $T:\Set\to\Set$, a monad $B$, and a distributive
  law $TB\to BT$. Any $(L,\delta)$ with $L:\BAlg\to\BAlg$ and
  $\delta_K:LQK\to QK\olT$ gives rise to a logic for $BT$-coalgebras
  invariant under $B$-trace semantics.
\end{theorem}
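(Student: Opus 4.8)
The plan is to imitate the proof of invariance under bisimilarity sketched in Section~\ref{S:CoalgLogic}, with the final $\wtT$-coalgebra (which, by Remark~\ref{rmk:fin-sem-BAlg}, computes the trace semantics) playing the role there taken by the final $T$-coalgebra. Everything reduces to one lemma: \emph{$\wtT$-coalgebra morphisms preserve the semantics}. Concretely, let $f:(A,\alpha)\to(A',\alpha')$ be a morphism of $\wtT$-coalgebras, so that $\wtT f\circ\alpha=\alpha'\circ f$; each of $QA$ and $QA'$ carries the $L$-algebra structure $Q\alpha\circ\delta_A$, resp. $Q\alpha'\circ\delta_{A'}$, used in \eqref{equ:tr-semantics}. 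I claim $Qf:QA'\to QA$ is an $L$-algebra morphism. Indeed, using contravariant functoriality of $Q$ and the coalgebra-morphism law one rewrites
\[
Qf\circ Q\alpha'\circ\delta_{A'}=Q(\alpha'\circ f)\circ\delta_{A'}=Q(\wtT f\circ\alpha)\circ\delta_{A'}=Q\alpha\circ Q\wtT f\circ\delta_{A'},
\]
and the remaining identity $Q\wtT f\circ\delta_{A'}=\delta_A\circ LQf$ is precisely the naturality square of $\delta:LQ\To Q\wtT$ at $f$ (both $LQ$ and $Q\wtT$ are contravariant, so the variances agree). Hence $Qf\circ(Q\alpha'\circ\delta_{A'})=(Q\alpha\circ\delta_A)\circ LQf$, as required.

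From the lemma the invariance follows formally. Since $\sem{\cdot}_{\alpha'}:I\to QA'$ is the unique $L$-algebra morphism out of the initial $L$-algebra and $Qf$ is an $L$-algebra morphism, the composite $Qf\circ\sem{\cdot}_{\alpha'}$ is again an $L$-algebra morphism $I\to QA$; by initiality it equals $\sem{\cdot}_\alpha$. Applying this to the unique $\wtT$-coalgebra morphism $\wttr:(FX,\wtgamma)\to(Z,\zeta)$ into the final coalgebra $\zeta:Z\to\wtT Z$ (which exists under Assumption~\ref{ass:fin} by Remark~\ref{rmk:fin-sem-BAlg}) yields $\sem{\phi}_\wtgamma=\sem{\phi}_\zeta\circ\wttr$ for every $\phi\in I$, so the semantics factors through the trace map.

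To read this off at the level of states, I would invoke Proposition~\ref{prop:wtgamma}, which gives $\sem{\phi}_\gamma(x)=U\sem{\phi}_\wtgamma(\eta_X(x))$, and substitute the factorisation together with $\tr=U\wttr\circ\eta_X$ from the definition of the trace semantics:
\[
\sem{\phi}_\gamma(x)=U\sem{\phi}_\zeta\bigl(U\wttr(\eta_X(x))\bigr)=U\sem{\phi}_\zeta(\tr(x)).
\]
Thus $\sem{\phi}_\gamma(x)$ depends on $x$ only through $\tr(x)$, so $\tr(x)=\tr(y)$ forces $\sem{\phi}_\gamma(x)=\sem{\phi}_\gamma(y)$ for all $\phi$ --- which is exactly invariance under $B$-trace semantics in the sense of Definition~\ref{def:tr-sem}.

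The one delicate point, and the main obstacle, is that the lemma is applied at the final coalgebra $Z$, which is generally \emph{not} free (see Example~\ref{exle:B=Pom:2}), whereas the hypothesis supplies $\delta$ only as $\delta_K:LQK\To QK\olT$ on the free algebras in the image of $K$. This is where the finiteness hypotheses earn their keep: by Remark~\ref{rmk:wtT-fb} the functor $\wtT$ is strongly finitary and $Z=\wtT^\omega F\emptyset$ is the $\omega$-limit of the free approximants $\wtT^nF\emptyset$, so $\delta_K$ extends (essentially uniquely) to a natural $\delta:LQ\To Q\wtT$ on all of $\BAlg$; one must still verify that this extension is natural at the limit $Z$. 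I expect this extension/continuity check to be the only real work. Alternatively, one can avoid $Z$ altogether by running the lemma at each free stage $\wtT^nF\emptyset$ and observing that a formula of $L$-depth $n$ (i.e. lying in $L^n0\subseteq I$) is determined by $\tr_n$, which already suffices for Definition~\ref{def:tr-sem}.
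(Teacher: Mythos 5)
Your proof is correct and is essentially the paper's own argument: your key lemma that $Qf$ is an $L$-algebra morphism for every $\wtT$-coalgebra morphism $f$ is exactly what the bottom two squares of the paper's diagram encode (naturality of $\delta$ plus $Q$ applied to the coalgebra-morphism square), and both proofs then obtain $\sem{\cdot}_\wtgamma=Q\wttr\circ\sem{\cdot}_\zeta$ by initiality, applied to the finality morphism $\wttr:(FX,\wtgamma)\to(Z,\zeta)$, and read this off on states via Proposition~\ref{prop:wtgamma}. Your closing observation --- that the hypothesis supplies $\delta$ only on the free algebras in the image of $K$ while the final coalgebra $Z$ is generally not free, so one must first extend $\delta$ continuously to all of $\BAlg$ --- is a genuine subtlety that the paper's proof passes over silently and only addresses later, via Proposition~\ref{prop:pred-lift}.
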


\begin{proof}
  For a given $\gamma:X\to BTX$ and formula $\phi$, we have to show
  that $\tr(x)=\tr(y)$ implies $x\Vdash\phi \Leftrightarrow
  y\Vdash\phi$. Expressing this in $\BAlg$, this amounts to
  $\wttr(\eta_X(x))=\wttr(\eta_X(y))$ only if $x\in\sem{\phi}_\wtgamma
  \Leftrightarrow y\in\sem{\phi}_\wtgamma$. But this is immediate from
  the initiality of the algebra of formulas as follows. Let
  $(Z,\zeta)$ be the final $\wtT$-coalgebra. 
\begin{equation}%\label{equ:sem-B-coalg}
  \xymatrix{
    LI\ar[d]_{L\sem{\cdot}_\zeta}\ar[rr] && I\ar[d]^{\sem{\cdot}_\zeta}\\
    LQZ\ar[r]^{\delta_{Z}}\ar[d]_{LQ\wttr} 
    & Q\wtT Z\ar[r]^{Q\zeta}\ar[d]_{Q\wtT\wttr} & QZ\ar[d]_{Q\wttr}\\
    LQFX\ar[r]^{\delta_{FX}} & Q\wtT FX \ar[r]^{Q\wtgamma} & QFX\\
  }
\end{equation}
Since morphisms from the initial algebra $LI\to I$ are uniquely
determined, we must have
$\sem{\cdot}_\wtgamma=Q\wttr\circ\sem{\cdot}_\zeta$.
\end{proof}

\hide{
\begin{remark}
  As in \cite{KupkeKurzPattinson04RLC} the logic will be complete if
  $\delta$ is in injective and expressive if $\delta$ is surjective
  and $L$ respectively preserves these properties.
\end{remark}

\begin{remark}
  there should be a remark analogous to Remark 1
\end{remark}
}

\subsection{Predicate Liftings}\label{sec:pred-lift}
\newcommand{\fp}{\mathsf{fp}} Whereas the previous section treats
logics from an abstract point of view, we are now going to see how to
describe them concretely using predicate liftings. First, we need to
extend the set-based notion of predicate lifting
\cite{pattinson:cml-j,schroeder:fossacs05} to coalgebras over $\BAlg$.

Suppose we have $L$ and $$LQ\to Q\wt T.$$ Using $\Id\to QQ$ from the
adjunction \eqref{E:balgbalgop} this gives us $$L\to LQQ\to Q\wt T
Q.$$ We will see below that $Q\wt T Q$ gives us predicate liftings,
but first we are going to show how to recover $LQ\to Q\wt T$ from
$L\to Q\wt T Q$. Write $$J:\Kl_\omega B\to \BAlg$$ for the inclusion of
the category of finitely generated free algebras into $\BAlg$. 

\begin{proposition}\label{prop:pred-lift}
  Let $L$ be determined by finitely generated free algebras as in
  Remark~\ref{rmk:wtT-fb}. Then there is a bijection between natural
  transformations $LQ\to Q\wtT$ and natural transformations $LJ\to
  Q\wtT QJ$.
\end{proposition}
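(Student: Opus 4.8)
The plan is to factor the claimed bijection through the natural transformations $L\To Q\wtT Q$, exactly as the discussion preceding the proposition suggests, and to realise each factor as an instance of a standard universal property. Concretely, I would establish
\[
\Nat(LQ,Q\wtT)\;\cong\;\Nat(L,Q\wtT Q)\;\cong\;\Nat(LJ,Q\wtT QJ),
\]
where the first bijection is the mate correspondence for the dual self-adjunction $Q\ladj Q$ of \eqref{E:balgbalgop}, and the second is the universal property of the left Kan extension $L=\Lan{J}{(LJ)}$ furnished by the hypothesis that $L$ is determined by finitely generated free algebras.

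For the first bijection, write $\iota:\Id\To QQ$ for the unit of \eqref{E:balgbalgop} (the evaluation map, $\iota_A=\exp\circ\exp$ in the notation of Example~\ref{exle:one}), which satisfies the triangle identity $Q\iota_A\circ\iota_{QA}=\id_{QA}$. Given $\delta:LQ\To Q\wtT$ I would set $\Phi(\delta)_A=\delta_{QA}\circ L\iota_A:LA\to Q\wtT QA$, and given $\lambda:L\To Q\wtT Q$ I would set $\Psi(\lambda)_A=Q\wtT\iota_A\circ\lambda_{QA}:LQA\to Q\wtT A$. Both are natural, being built by whiskering natural pieces. That $\Psi\circ\Phi=\id$ and $\Phi\circ\Psi=\id$ is then a direct calculation: in each case one applies the naturality square of $\delta$ (resp. $\lambda$) to the component $\iota_A$, and then collapses the resulting composite $Q\iota_A\circ\iota_{QA}$ (resp. its image under $\wtT$ and then $Q$) to the identity via the triangle identity. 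This factor uses nothing about finiteness: it holds for arbitrary $L$ and $\wtT$ and any dual adjunction.

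For the second bijection I would invoke that, by assumption, $L\cong\Lan{J}{(LJ)}$ with the canonical comparison $LJ\To LJ$ an isomorphism. The defining universal property of the left Kan extension gives, for every functor $G:\BAlg\to\BAlg$, a bijection $\Nat(\Lan{J}{(LJ)},G)\cong\Nat(LJ,GJ)$ implemented by restriction along $J$; taking $G=Q\wtT Q$ yields $\Nat(L,Q\wtT Q)\cong\Nat(LJ,Q\wtT QJ)$. Composing the two bijections sends a $\delta$ to the predicate lifting with components $\delta_{QJn}\circ L\iota_{Jn}:LJn\to Q\wtT QJn$, which is precisely the transformation $L\to LQQ\to Q\wtT Q$ restricted to $J$ that was announced before the statement.

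Since the two triangle-identity verifications are routine, the genuine content — and the only place the hypothesis on $L$ is used — is the identification of the restriction map with the Kan-extension bijection. The main thing to get right is the bookkeeping of variance (both $LQ$ and $Q\wtT$ are contravariant, while $L$ and $Q\wtT Q$ are covariant), so that the naturality squares of $\delta$ and $\lambda$ are applied on the correct side; I expect this, rather than any deeper difficulty, to be the only real pitfall. Note that no property of $\wtT$ beyond functoriality is needed, as the universal property of $\Lan{J}{(LJ)}$ holds for the arbitrary target $G=Q\wtT Q$.
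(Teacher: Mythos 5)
Your proof is correct, and once unwound it produces the same maps as the paper's: your forward map $\Phi(\delta)_{Fn}=\delta_{QFn}\circ L\iota_{Fn}$ is the paper's $\rho=\delta Q\circ L\eta$, and since the adjoint transpose satisfies $\check\phi_i=Q\phi_i\circ\iota_A$, composing your Kan-extension step with $\Psi$ recovers exactly the paper's defining diagram \eqref{equ:deltarho} for the inverse. The difference is in the decomposition, and it is a genuine one. The paper constructs the inverse in a single step --- given $\rho$, it defines $\delta_A$ as the unique map out of the colimit $LQA=\colim_i LFn_i$ with cocone components $Q\wtT\check\phi_i\circ\rho_{Fn_i}$ --- and then checks the two round trips by explicit diagram chases, one using the adjunction triangle plus a naturality square, the other by specializing to $A=QFn$ and $\check\phi_i=\id$. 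You instead factor the bijection through $\Nat(L,Q\wtT Q)$ and identify each factor as a standard universal property: the mate correspondence for the dual adjunction \eqref{E:balgbalgop}, which as you rightly stress needs no hypothesis on $L$ or $\wtT$, and the universal property of $L\cong\Lan{J}{(LJ)}$, which is the only place Remark~\ref{rmk:wtT-fb} is used. Your variance bookkeeping is sound: both round-trip identities for $\Phi$ and $\Psi$ do reduce, via the relevant naturality square at $\iota_A$, to the triangle identity $Q\iota_A\circ\iota_{QA}=\id_{QA}$, applied once directly and once under the contravariant $Q\wtT$. What your route buys is modularity and a sharp localization of the finiteness assumption; what the paper's route buys is the explicit formula \eqref{equ:deltarho}, which is not merely a proof device: Definition~\ref{def:LT} defines the generic $\delta_T$ by precisely that diagram, and the comparison lemma, Corollary~\ref{cor:LT}, and the final expressiveness and completeness arguments all manipulate $\rho$ through it. One cosmetic slip: ``the canonical comparison $LJ\To LJ$ an isomorphism'' should simply say that $L$, equipped with the identity $LJ\to LJ$, is a left Kan extension of $LJ$ along $J$; nothing in your argument depends on this phrasing.
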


\begin{proof}
  Given $\delta:LQ\to Q\wt T$ we obtain $\rho:LJ\to Q\wt T QJ$ as
  $\delta Q\circ L\eta$. Conversely, given $\rho$, we write $QA$ as a
  colimit $\phi_i:Fn_i\to QA$, which is preserved by $L$, and obtain
  $\delta$ via
  \begin{equation}\label{equ:deltarho}
    \xymatrix{
      QA & L QA \ar[r]^{\delta_A} & Q\wtT A\\
      Fn_i\ar[u]^{\phi_i} & L Fn_i \ar[u]^{L \phi_i}\ar[r]^{\rho_{Fn_i}} & Q\wtT QFn_i
      \ar[u]_{Q\wtT\check \phi_i} }
   \end{equation}
   where $\check \phi_i:A\to QFn_i$ is the adjoint transpose of $\phi_i$. To
   check that these two assignments are inverse to each other, we first
   note that the diagram \eqref{equ:deltarho} can be rewritten as
  \begin{equation}
    \xymatrix{
      & L QA \ar[r]^{\delta_A} & Q\wtT A\\ 
      L Fn_i \ar[ur]^{L \phi_i} \ar[r]^{L\eta} & 
      L QQFn_i \ar[u]_{L Q \check \phi_i} \ar[r]_{\delta QFn_i} & 
      Q\wtT QFn_i \ar[u]_{Q\wtT\check \phi_i} }
   \end{equation}
   where the triangle commutes because of the adjunction
   \eqref{E:balgbalgop} and the quadrangle commutes because of
   naturality. It follows that starting from $\delta$ and defining
   $\rho$, the original $\delta$ satisfies \eqref{equ:deltarho} and
   therefore agrees with the $\delta$ defined from $\rho$. Conversely,
   defining $\delta$ from $\rho$ in \eqref{equ:deltarho}, one can
   choose $A=QFn$, $n_i=n$ and $\check\phi=\id$, which shows that $\delta$
   determines the $\rho$ it comes from uniquely.
\end{proof}

We can interpret the proposition as follows. An element of $$\QSet
A=UQA$$ is a predicate on $A$. An element of $$[n,\QSet A]$$ is an
$n$-ary predicate on $A$. We have $[n,\QSet
A]\cong[Fn,QA]\cong[A,QFn]$ and find it useful to introduce the
following notation. We want to write $\phi$ for $n$-ary predicates and if
we want to make precise which of the three presentations we use, we
write 
\begin{equation}\label{equ:notation-predicates}
  \underline{\phi}\in [n,\QSet A] \quad\quad \phi = \hat \phi \in
  [Fn,QA]\quad\quad \check\phi \in [A,QFn].
\end{equation}

\medskip\noindent Next we show how elements $l\in LFn$ are $n$-ary
modal operators. Given an $n$-ary predicate $\phi$ on $A$, the `modal
operator' $l$ induces an predicate on $\wtT A$ as follows.
\begin{equation}\label{equ:tr-pred-lift}
\xymatrix{
\wtT A\ar[r]^{\wtT(\check a)} & \wtT QFn \ar[r]^{\rho_{Fn}(l)} & \Omega
}
\end{equation} 
This shows that the meaning of the modal operator $l\in LFn$ is fully
determined by the image $\rho_{Fn}(l)\in Q\wtT QFn$. We turn this
observation into a definition.

\begin{definition}
  Elements of $Q\wtT QFn$ are called $n$-ary predicate liftings. Each
  $\lambda\in Q\wtT QFn$ induces a natural transformation
\begin{equation}\label{equ:pred-lift-new}
\begin{array}{rl}
[Fn,QA] & \to  Q\wtT A\\
\phi & \mapsto \lambda\circ\wtT(\check\phi) 
\end{array}
\end{equation}
\end{definition}

\begin{example}\label{exle:pred-lift}
  Consider $B=\Pom$, $T=\{*\}+\Act\times\Id$, $\wtT(A)=F\{*\} +
  \Act\cdot A$. As in Example~\ref{exle:B=Pom:2}, we identify
  $F\emptyset$ with $\{\emptyset\}$ and $\wtT^n(F\emptyset)$ with
  $\Pom(1+\Act +\ldots \Act^n)$. The initial and final $\wtT$-algebras
  are then $\Pom(\Act^*)$ and $\P(\Act^*)$,
  respectively. 
  Recall that $QA=[A,F1]=[A,\bbtwo]$ and we write
  $0,1\in\bbtwo$. Further note that, for finite $n$, there is a
  bijection $UQFn=U[Fn,\bbtwo]\cong\Set(n,2)\cong Bn= UFn$ which
  extends to a semi-lattice isomorphism $QFn\cong Fn$.

  In order to obtain the clause for $\tick$, we instantiate
  (\ref{equ:pred-lift-new}) with $n=\emptyset$ (because $\tick$ is a
  constant) and let $\lambda_\tick$ be the unique isormorphism
\begin{equation}
\wtT QF\emptyset\cong \wtT F\emptyset = F\{*\}
  + \Act\cdot F\emptyset \cong F\{*\} \too \bbtwo.
\end{equation}
Consider $A$ and $\phi:F\emptyset\to QA$ and $\check\phi:A\to
QF\emptyset\cong F\emptyset$. This gives us the semantics of $\tick$
as follows.
$\delta_A(\tick)\in Q\wtT A$ as in \eqref{equ:deltarho} is the map
\begin{equation}\label{equ:sem-tick}
  \xymatrix{
    F\{*\} + \Act\cdot A \ar[r]^{\quad\quad\quad\delta(\tick)}
                          \ar[d]_{F\{*\}+\Act\cdot \check\phi}
    & \bbtwo \\
    F\{*\} + \Act\cdot F\emptyset \ar[ur]_{\lambda_\tick}
  }
\end{equation}
Finally, putting this together with (\ref{equ:tr-semantics}) and
(\ref{equ:sem-set-coalg}) we find that, as expected,
$$x\Vdash \tick \ \Leftrightarrow *\in \gamma(x).$$

In order to obtain the clause for $\langle a \rangle\phi$, we
instantiate (\ref{equ:pred-lift-new}) with $n=1$ and let $\lambda_a$
be given by the map
\begin{equation} \wtT QF1 \cong \wtT F1 = F\{*\} +
  \Act\cdot F1 \too \bbtwo
\end{equation}
which sends all generators $*$ and $b\in A, b\not=a$ to $0$ and $a$ to
1. Consider $A$ and choose some $\phi:F1\to QA$. Note that
$\check\phi:A\to QF1\cong F1\cong\twobb$. This gives us the semantics
of $\langle a \rangle\phi$ as follows.
$\delta(\langle a \rangle\phi)\in Q\wtT A$
as in \eqref{equ:deltarho} is the map
\begin{equation}\label{equ:sem-actions}
  \xymatrix{
    F\{*\} + \Act\cdot A \ar[r]^{\quad\quad\quad\delta(\langle a
\rangle\phi)}
                          \ar[d]_{F\{*\}+\Act\cdot \check\phi }
    & \bbtwo \\
    F\{*\} + \Act\cdot F1 \ar[ur]_{\lambda_{a}}
  }
\end{equation}
Finally, putting this together with (\ref{equ:tr-semantics}) and
(\ref{equ:sem-set-coalg}) we find that, as expected,
$$x\Vdash \langle a \rangle\phi  \ \Leftrightarrow \ (a,x')\in\gamma(x)\ \textrm{and}
\ x'\Vdash\phi.$$
\end{example}

Every collection of predicate liftings defines a functor.

\begin{definition}
  Given a collection of predicate liftings $\Lambda$ let $L_\Lambda
  A=F\coprod_{\lambda\in\Lambda} [F(n_\lambda),A]$, where $n_\lambda$
  is the arity of $\lambda$. The semantics $\delta_\Lambda$ acts on a
  generator $(\lambda,\phi)\in Q\wtT QFn\times [Fn,QA]$ as given by
  \eqref{equ:pred-lift-new}.
\end{definition}

\begin{example}
  Let $\Lambda=\{\lambda_\tick\}\cup\{\lambda_a\mid a\in\Act\}$ as in
  Example~\ref{exle:pred-lift}. Then $L_\Lambda A \cong F1 + \Act\cdot
  FUA$ and $\delta_\Lambda$ is given by \eqref{equ:sem-tick} and
  \eqref{equ:sem-actions}.  
\end{example}

It is possible to incorporate logical laws into the functor. 

\begin{example}\label{exle:LambdaE}
  Let $\Lambda=\{\lambda_\tick\}\cup\{\lambda_a\mid a\in\Act\}$ as in
  Example~\ref{exle:pred-lift} and consider the set $E$ of equations
  given by \eqref{equ:exle1-axioms}. Then $L_{\Lambda E}  \cong F1 +
  \Act\cdot\Id$ and $\delta_{\Lambda E}$ is given by
  \eqref{equ:sem-tick} and \eqref{equ:sem-actions}. Furthermore, we
  have
\begin{equation}
\label{equ:deltaLE-kappa}
\vcenter{
\xymatrix@R=5pt{
F1+Act\cdot Q \ar[dd]_\cong\ar[dr]^{\kappa} & \\
& Q\wtT \\
L_{\Lambda E}Q \ar[ur]_{\delta_{\Lambda E}}
}}
\end{equation}
where, on finite $A$, $\kappa_A$ is the isomorphism
\begin{equation}
\label{equ:kappa}
\xymatrix@R=5pt{
  F1+Act\cdot QA \ar[r] 
& F1\times\prod_\Act QA \ar[r] 
& Q(F1+Act\cdot A) \ar[r]^{}
& Q\wtT A 
}
\end{equation}
where the first iso comes from \eqref{equ:co-prod}, the second is due
to $Q$ being a hom-functor, and the third is from the definition of
$\wtT$.
\end{example}

To summarise, we have extracted from the example in
Section~\ref{sec:examples} a general framework that allows to define
trace logics for general functors $T$ and monads $B$ satisfying
Assumption~\ref{ass:fin}.

\subsection{A generic trace logic}\label{sec:gen-logic}

In this section, we show how to define a logic $(L_T,\delta_T)$ for
general functors $T$ and monads $B$ satisfying
Assumption~\ref{ass:fin}. We show that the example from the previous
section arises in that way.

\begin{definition}\label{def:L} 
  The functor $L_{T}:\BAlg\to\BAlg$ is defined on finitely generated
  free algebras $Fn$ as $L_{T}Fn=Q\wtT QFn$. Since every $A\in\BAlg$
  is a colimit of finitely generated free algebras, this extends
  continuously to all $A\in\BAlg$.
\end{definition}
  
\begin{definition}\label{def:LT}
  The semantics $\delta_T : L_{T} Q \to Q \wtT$ is given by
  considering $QA$ as a colimit $\phi_i:Fn_i\to QA$, which is, by
  construction, preserved by $L_T$. More explicitly,
  $(\delta_{T})_{X}$ is the unique arrow making the following diagram  
  \begin{equation}\label{equ:deltaT}
    \xymatrix{
      QA & L_{T} QA \ar[r]^{(\delta_T)_A} & Q\wtT A\\
      Fn_i\ar[u]^{\phi_i} & L_TFn_i \ar[u]^{L_T\phi_i}\ar[r]^{=} & Q\wtT QFn_i
      \ar[u]_{Q\wtT\check \phi_i} }
   \end{equation}
   commute for each $i$; as in \eqref{equ:notation-predicates}, the
   arrow $\check \phi_i$ comes from applying the isomorphism
   $\BAlg(Fn_i, QA)\cong \BAlg(A,QF n_i)$ to $\phi_i$.
\end{definition} 

To show that the example of the previous section is actually the
generic one, we need a lemma helping us to compare the two logics.

\begin{lemma}
  Let $(L,\delta), (L',\delta')$ be two logics and $\rho, \rho'$ as in
  \eqref{equ:deltarho}. If there is an isomorphism $\alpha:LJ\to L'J$
  such that for all finite sets $n$ we have
\begin{equation}
\vcenter{
\xymatrix@R=5pt{
LFn \ar[dd]_{\alpha_n}\ar[dr]^{\rho} & \\
& Q\wtT QFn \\
L'Fn \ar[ur]_{\rho'}
}}
\end{equation}
then this extends to an isomorphism $\beta:L\to L'$ of logics, ie,
$\beta$ satisfies
\begin{equation}
\vcenter{
\xymatrix@R=5pt{
LQ \ar[dd]_{\beta Q}\ar[dr]^{\delta} & \\
& Q\wtT  \\
L'Q \ar[ur]_{\delta'}
}}
\end{equation}
Moreover, $\beta_{Fn}=\alpha_n$.
\end{lemma}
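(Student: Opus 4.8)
The plan is to produce $\beta$ by extending $\alpha$ along the left Kan extension that presents $L$ and $L'$, and then to check the $\delta$-triangle by appealing to the uniqueness built into \eqref{equ:deltarho}.

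First I would construct $\beta$. By Remark~\ref{rmk:wtT-fb} both $L$ and $L'$ are strongly finitary, that is $L=\Lan{J}{(LJ)}$ and $L'=\Lan{J}{(L'J)}$, where $J$ is the inclusion of the finitely generated free algebras. Functoriality of $\Lan{J}{(-)}$ then turns the natural transformation $\alpha:LJ\to L'J$ into $\beta:=\Lan{J}{\alpha}:L\to L'$. Concretely, writing any $A\in\BAlg$ as a colimit $\psi_i:Fn_i\to A$ of finitely generated free algebras (preserved by $L$ and $L'$), the components $\alpha_{n_i}:LFn_i\to L'Fn_i$ form a compatible cocone and induce the unique mediating map $\beta_A:LA\to L'A$; naturality of $\beta$ in $A$ is the usual consequence of this universal property. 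Since $\Lan{J}{(-)}$ preserves isomorphisms and each $\alpha_{n_i}$ is invertible, $\beta$ is an isomorphism, with inverse $\Lan{J}{(\alpha^{-1})}$. Finally, the unit $LJ\to(\Lan{J}{(LJ)})J=LJ$ is the identity, so $\beta$ restricts to $\alpha$ along $J$; in particular $\beta_{Fn}=\alpha_n$.

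Next I would verify $\delta=\delta'\circ\beta Q$. Fix $A$ and present $QA$ as the colimit $\phi_i:Fn_i\to QA$ used in \eqref{equ:deltarho}, which is preserved by both $L$ and $L'$. By the uniqueness clause of \eqref{equ:deltarho} it is enough to show that $\delta'_A\circ\beta_{QA}$ satisfies the defining equations of $\delta_A$, i.e.\ $\delta'_A\circ\beta_{QA}\circ L\phi_i=Q\wtT\check\phi_i\circ\rho_{Fn_i}$ for every $i$. Naturality of $\beta$ at $\phi_i$ together with $\beta_{Fn_i}=\alpha_{n_i}$ gives $\beta_{QA}\circ L\phi_i=L'\phi_i\circ\alpha_{n_i}$; the defining property of $\delta'$ from $\rho'$ gives $\delta'_A\circ L'\phi_i=Q\wtT\check\phi_i\circ\rho'_{Fn_i}$; and the hypothesis $\rho=\rho'\circ\alpha$ gives $\rho'_{Fn_i}\circ\alpha_{n_i}=\rho_{Fn_i}$. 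Composing the three identities yields $Q\wtT\check\phi_i\circ\rho_{Fn_i}$, as required, so $\delta'_A\circ\beta_{QA}=\delta_A$.

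I expect the first step to carry the real content: one must take seriously that $L$ and $L'$ are genuinely left Kan extensions of their restrictions to the finitely generated free algebras, so that $\beta$ exists, is natural on all of $\BAlg$, restricts to $\alpha$, and is invertible. Once $\beta$ and the identity $\beta_{Fn}=\alpha_n$ are in hand, the second step is purely formal; the only point to watch is that a single colimit presentation of $QA$ may be used simultaneously in the defining equations for $\delta$ and $\delta'$, which is legitimate because both $L$ and $L'$ preserve it.
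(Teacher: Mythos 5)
Your proof is correct, and it is the argument the paper intends: the paper states this lemma without giving a proof, and the two ingredients you use --- extending $\alpha$ along the left Kan extension presenting $L$ and $L'$ on finitely generated free algebras (Remark~\ref{rmk:wtT-fb}), and then identifying $\delta'_A\circ\beta_{QA}$ with $\delta_A$ via the uniqueness of arrows out of the colimit presentation of $QA$ in \eqref{equ:deltarho}, using that the original $\delta$ satisfies \eqref{equ:deltarho} for the $\rho$ it induces (as shown in the proof of Proposition~\ref{prop:pred-lift}) --- are precisely the machinery the paper sets up for this purpose. The only point worth flagging is cosmetic: the unit $LJ\to(\Lan{J}{(LJ)})J$ is canonically an isomorphism rather than literally the identity, so $\beta_{Fn}=\alpha_n$ holds under the same identification of $L$ with its Kan extension that the lemma itself presupposes.
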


Consequently, any collection of isomorphisms $Ln\to Q\wtT QFn$,
$n\in\mathbb{N}$,  defines the same logic, or, more precisely:

\begin{corollary}\label{cor:LT}
  The generic logic $L_T$ is determined up to isomorphism, that is,
  for any other logic $(L,\delta)$ with the $LFn\to Q\wtT QFn$ as in
  \eqref{equ:deltarho} being isos, there is a unique isomorphism $L\to L_T$
  such that 
\begin{equation}
  \vcenter{
    \xymatrix@R=5pt{
      LQ \ar[dd]_{}\ar[dr]^{\delta} & \\
      & Q\wtT  \\
      L_TQ \ar[ur]_{\delta_T}
    }}
\end{equation}
\end{corollary}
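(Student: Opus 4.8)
The plan is to derive the corollary directly from the preceding Lemma by instantiating it with $(L',\delta')=(L_T,\delta_T)$. The first step is to identify the predicate-lifting transformation $\rho$ associated to the generic logic. Since $L_T Fn=Q\wtT QFn$ by Definition~\ref{def:L}, and $\delta_T$ is reconstructed through \eqref{equ:deltaT} with the lower horizontal arrow being the identity (rather than a general $\rho_{Fn}$ as in \eqref{equ:deltarho}), comparing the two diagrams shows that the $\rho$ of $L_T$ is the identity, $\rho^{L_T}_{Fn}=\id_{Q\wtT QFn}$. This is the observation that makes $L_T$ the canonical representative of its isomorphism class.

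Next, given any logic $(L,\delta)$ whose components $\rho_{Fn}\colon LFn\to Q\wtT QFn$ are isomorphisms, I would set $\alpha_n:=\rho_{Fn}$. Naturality of $\rho$ makes $\alpha$ an isomorphism $LJ\to L_T J$. The hypothesis of the Lemma is commutativity of the triangle relating $\rho$, $\rho^{L_T}$ and $\alpha$; since $\rho^{L_T}=\id$ this reduces to $\rho_{Fn}=\alpha_n$, which holds by construction. The Lemma then produces an isomorphism of logics $\beta\colon L\to L_T$ with $\delta=\delta_T\circ\beta Q$ and $\beta_{Fn}=\alpha_n=\rho_{Fn}$, which is exactly the commuting triangle asserted in the corollary. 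This settles existence.

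It remains to prove uniqueness. I would show that any isomorphism of logics $\beta'\colon L\to L_T$ already satisfies $\beta'_{Fn}=\rho_{Fn}$. Applying the correspondence $\delta\mapsto\rho=\delta Q\circ L\eta$ of Proposition~\ref{prop:pred-lift} to the identity $\delta=\delta_T\circ\beta' Q$, and using naturality of $\beta'$ against the unit $\eta\colon\Id\to QQ$, one computes $\rho_{Fn}=\rho^{L_T}_{Fn}\circ\beta'_{Fn}=\beta'_{Fn}$. Because $L$ and $L_T$ are both determined by finitely generated free algebras (Remark~\ref{rmk:wtT-fb}), i.e.\ are left Kan extensions along $J$, natural transformations $L\to L_T$ are in bijection with natural transformations $LJ\to L_T J$; hence any such transformation is pinned down by its values on the $Fn$, so $\beta'=\beta$.

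I expect the main obstacle to be the uniqueness step — specifically, verifying that $\delta\mapsto\rho$ is compatible with precomposition by $\beta'Q$ (so that being a logic morphism really forces $\beta'_{Fn}=\rho_{Fn}$), and then invoking the left-Kan-extension property to propagate agreement on finitely generated free algebras to all of $\BAlg$. The existence half, by contrast, is a routine instantiation of the Lemma once one notices that the generic logic is precisely the one whose associated $\rho$ is the identity.
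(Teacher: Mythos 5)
Your proof is correct and follows essentially the route the paper intends: the paper states the corollary as an immediate consequence of the preceding lemma, and your argument is exactly that instantiation, using the key observation that the $\rho$ associated to $(L_T,\delta_T)$ is the identity so that $\alpha_n:=\rho_{Fn}$ satisfies the lemma's hypothesis. Your uniqueness argument (forcing $\beta'_{Fn}=\rho_{Fn}$ via the $\delta\mapsto\rho$ correspondence and then invoking the left-Kan-extension property along $J$) correctly fills in a detail the paper leaves implicit.
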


Finally, we can show that the generic logic of this subsection agrees
with the logic defined, in different ways, by
\eqref{equ:exle1-syntax}-\eqref{equ:exle1-axioms}, or again in
Example~\ref{exle:two} or in Example~\ref{exle:LambdaE}.

\begin{proposition}
  Going back to Example~\ref{exle:LambdaE}, there is an isomorphism
  such that
\begin{equation}
\xymatrix@R=5pt{
L_{\Lambda E}Q\ar[dd]_\cong\ar[dr]^{\delta_{\Lambda E}} & \\
& Q\wtT \\
L_TQ \ar[ur]_{\delta_T}
}
\end{equation}
\end{proposition}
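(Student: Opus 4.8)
The plan is to reduce the statement to Corollary~\ref{cor:LT}. That corollary says that any logic $(L,\delta)$ whose induced maps $\rho_{Fn}\colon LFn\to Q\wtT QFn$ from \eqref{equ:deltarho} are isomorphisms for all finite $n$ is canonically isomorphic, as a logic, to the generic $(L_T,\delta_T)$. It therefore suffices to verify this hypothesis for $(L,\delta)=(L_{\Lambda E},\delta_{\Lambda E})$; the resulting isomorphism of logics is exactly the commuting triangle asserted in the proposition, with the lower leg $\delta_T$ and the upper leg $\delta_{\Lambda E}$.

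So the real work is to show that $\rho_{Fn}$ is an isomorphism for $L_{\Lambda E}$. Recall from the proof of Proposition~\ref{prop:pred-lift} that $\rho=\delta_{\Lambda E}Q\circ L_{\Lambda E}\eta$, so that $\rho_{Fn}=(\delta_{\Lambda E})_{QFn}\circ L_{\Lambda E}(\eta_{Fn})$, where $\eta_{Fn}\colon Fn\to QQFn$ is the unit of \eqref{E:balgbalgop}. First I would unwind the second factor: since $L_{\Lambda E}\cong F1+\Act\cdot\Id$ (Example~\ref{exle:LambdaE}), the map $L_{\Lambda E}(\eta_{Fn})$ corresponds, under this iso, to $\id_{F1}+\Act\cdot\eta_{Fn}$. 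Then I would rewrite the first factor using the triangle \eqref{equ:deltaLE-kappa}, which factors $\delta_{\Lambda E}$ through $\kappa$ via the fixed iso $\sigma\colon F1+\Act\cdot Q\xrightarrow{\cong}L_{\Lambda E}Q$; evaluated at $A=QFn$ this gives $(\delta_{\Lambda E})_{QFn}=\kappa_{QFn}\circ\sigma^{-1}_{QFn}$. Hence $\rho_{Fn}$ is a composite of $\kappa_{QFn}$, $\sigma^{-1}_{QFn}$, and $\id_{F1}+\Act\cdot\eta_{Fn}$.

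It then remains to observe that each of these three maps is an isomorphism. The map $\sigma$ is an iso by construction. The unit $\eta_{Fn}$ is an isomorphism because $Fn$ is a finite semi-lattice and the unit of \eqref{E:balgbalgop} restricts to an iso on finite semi-lattices (Example~\ref{exle:one}); consequently $\id_{F1}+\Act\cdot\eta_{Fn}$ is an iso as well. Finally $\kappa_{QFn}$ is an iso because $\kappa_A$ is the isomorphism \eqref{equ:kappa} whenever $A$ is finite, and here $A=QFn$ is finite since $QFn\cong Fn$ by \eqref{setify0}. Therefore $\rho_{Fn}$ is an isomorphism for every finite $n$, and Corollary~\ref{cor:LT} delivers the desired iso of logics.

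The main obstacle is the finiteness bookkeeping hidden in these identifications: $\kappa$ is an iso only on finite algebras and the unit is an iso only on finite semi-lattices, so the argument genuinely relies on $QFn$ being finite, which is exactly \eqref{setify0}. The one further point requiring care is checking that the $\rho$ attached to $(L_{\Lambda E},\delta_{\Lambda E})$ really factors as $\kappa\circ(\text{unit})$; this follows by combining the defining formula $\rho=\delta Q\circ L\eta$ with the commuting triangle \eqref{equ:deltaLE-kappa} and the naturality of $\kappa$, but it is the step where the two descriptions of the logic must be matched up carefully.
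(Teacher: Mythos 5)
Your proof is correct and takes essentially the same approach as the paper's: reduce to Corollary~\ref{cor:LT}, factor $\rho_{Fn}=\delta_{QFn}\circ L\eta$ as in the proof of Proposition~\ref{prop:pred-lift}, and conclude from $\eta$ being an iso on finite semi-lattices together with $\delta_{QFn}$ being an iso. The only difference is presentational: where the paper simply cites Example~\ref{exle:LambdaE} for $\delta_{QFn}$ being iso, you unpack that citation explicitly via the factorisation through $\kappa$ in \eqref{equ:deltaLE-kappa} and the finiteness of $QFn\cong Fn$.
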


\begin{proof}
  We write $(L,\delta)$ for $(L_{\Lambda E},\delta_{\Lambda E})$ and
  $\rho$ for the natural transformation as in
  \eqref{equ:deltarho}. According to Corollary~\ref{cor:LT}, it is
  enough to show that $\rho_{Fn}:LFn\to Q\wtT QFn$ is an
  isomorphism. From the proof of Proposition~\ref{prop:pred-lift}, we
  know that $\rho_{Fn}=\delta_{QFn}\circ L\eta$. Since $\eta$ is an
  isomorphism for finite semi-lattices, the result now follows from
  $\delta_{QFn}$ being iso, see Example~\ref{exle:LambdaE}.
\end{proof}

Finally, Definition~\ref{def:LT} does not depend on the choice of a
partiuclar $T$ or $B$, so we can summarise this section as follows.

\begin{theorem}
  For every monad $B$ on $\Set$ and functor $T:\Set\to\Set$ satisfying
  Assumption~\ref{ass:fin} there is a generic trace logic.
\end{theorem}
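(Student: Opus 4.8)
The plan is to show that Definitions~\ref{def:L} and~\ref{def:LT} genuinely produce a functor $L_T:\BAlg\to\BAlg$ together with a natural transformation $\delta_T:L_TQ\to Q\wtT$, since this pair is exactly what the notion of a trace logic requires. Genericity is then automatic: neither definition refers to anything beyond the monad $B$, the functor $T$, and the distributive law guaranteed by Assumption~\ref{ass:fin}.

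First I would make the variance of the construction explicit. As $Q=[-,\Omega]$ is contravariant and $\wtT$ is covariant, the composite $Q\wtT Q$ is a \emph{covariant} functor $\BAlg\to\BAlg$, and restricting along the inclusion $J:\Klom B\to\BAlg$ gives a covariant functor $Q\wtT QJ:\Klom B\to\BAlg$, $Fn\mapsto Q\wtT QFn$. This is precisely the data of Definition~\ref{def:L}. To obtain $L_T$ on all of $\BAlg$ I would take the left Kan extension of $Q\wtT QJ$ along $J$. This exists because $\Klom B$ is essentially small and $\BAlg$, being a finitary variety under Assumption~\ref{ass:fin}, is cocomplete. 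Since $J$ is full and faithful, $L_T J\cong Q\wtT QJ$, so $L_T$ agrees with $Q\wtT Q$ on finitely generated free algebras and is determined by them; that is, $L_T$ is strongly finitary in the sense of Remark~\ref{rmk:wtT-fb}. This realises the ``extends continuously'' clause of Definition~\ref{def:L}.

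Next I would produce $\delta_T$ as a single instance of the bijection of Proposition~\ref{prop:pred-lift}. Because $L_T$ is determined by finitely generated free algebras, that proposition applies and yields a bijection between natural transformations $L_TQ\to Q\wtT$ and natural transformations $L_TJ\to Q\wtT QJ$. By the previous paragraph the canonical identification $L_TJ\cong Q\wtT QJ$ is a legitimate element of the latter set, and I would define $\delta_T$ to be its image under the bijection. Feeding this identification (so $\rho_{Fn_i}=\id$ up to the canonical iso) into the inverse construction \eqref{equ:deltarho} reproduces diagram~\eqref{equ:deltaT} verbatim, so the resulting $\delta_T$ is exactly the one of Definition~\ref{def:LT}. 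The advantage of routing through Proposition~\ref{prop:pred-lift} is that $\delta_T$ then lies in the image of a bijection whose members are genuine natural transformations, so naturality and the unique existence of each $(\delta_T)_A$ come for free instead of being verified by hand.

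The one step that genuinely consumes Assumption~\ref{ass:fin}, and which I expect to be the crux, is that the arrows $(\delta_T)_A$ in \eqref{equ:deltaT} are \emph{forced} by the colimit presentation $\phi_i:Fn_i\to QA$ of $QA$; this needs $L_T$ to preserve that colimit. This is exactly what strong finitarity buys us, and it rests on $B$ and $T$ being finitary, so that $\wtT$ preserves sifted colimits by Remark~\ref{rmk:wtT-fb}, together with the left Kan-extension construction of $L_T$. Granting this, $(L_T,\delta_T)$ is a functor $\BAlg\to\BAlg$ equipped with a natural transformation $L_TQ\to Q\wtT$, hence a trace logic; and as the entire construction used only $B$, $T$, and their distributive law, it is generic, which is the assertion of the theorem.
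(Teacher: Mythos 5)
Your proposal is correct and takes essentially the same route as the paper, whose own argument for this theorem is just the observation that Definitions~\ref{def:L} and~\ref{def:LT} produce a pair $(L_T,\delta_T)$ from nothing but $B$, $T$, the distributive law, and the choice of $\Omega$: your left Kan extension along $J$ is exactly the paper's ``extends continuously'' clause (Remark~\ref{rmk:wtT-fb}), and your derivation of $\delta_T$ as the image of the canonical identification $L_TJ\cong Q\wtT QJ$ under the bijection of Proposition~\ref{prop:pred-lift} is precisely the paper's diagram~\eqref{equ:deltaT}, which is \eqref{equ:deltarho} with $\rho=\id$. The only point worth adding is that commutativity of $B$ (also part of Assumption~\ref{ass:fin}) is what makes $Q=[-,\Omega]$ available in the first place.
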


Of course, given $B$ and $T$, the real work consists in finding a good
explicit description of the generic logic. We have illustrated this
for the moment only with one example. 

We can apply the general framework to obtain results about generic
logics. For example, we have

\begin{theorem}
  The logic of Example~\ref{exle:LambdaE} is expressive and complete.
\end{theorem}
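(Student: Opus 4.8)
The strategy is to invoke the two one-step criteria recalled for $T$-bisimilarity after Diagram~\eqref{equ:semantics}, transported to the trace setting: a trace logic $(L,\delta)$ is complete as soon as each component of $\delta$ is injective, and expressive as soon as each component is surjective. By Example~\ref{exle:LambdaE} the one-step semantics $\delta_{\Lambda E}$ factors (Diagram~\eqref{equ:deltaLE-kappa}) through the isomorphism $L_{\Lambda E}Q\cong F1+\Act\cdot Q$ and the map $\kappa$, which is an isomorphism on finite semi-lattices; equivalently, by Corollary~\ref{cor:LT} and the preceding proposition, $\rho_{Fn}\colon L_{\Lambda E}Fn\to Q\wtT QFn$ is iso for every finite $n$. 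So $\delta_{\Lambda E}$ is a componentwise isomorphism on finitely generated free algebras, and in particular both mono and epi there. The entire argument then consists of (i) spelling out the two one-step criteria in the present setting and (ii) propagating the finite-stage isomorphisms to statements about arbitrary $(B,T)$-coalgebras.

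For completeness, formulas modulo the axioms $E$ are the initial $L_{\Lambda E}$-algebra $I$, which by Remark~\ref{rmk:wtT-fb} is the $\omega$-colimit of the initial $L$-sequence $0\to L0\to L^20\to\cdots$. Running the semantics on the final $\wtT$-coalgebra $(Z,\zeta)$ yields, as in the proof of Theorem~\ref{thm:main}, the canonical morphism $\sem{\cdot}_\zeta\colon I\to QZ$, and completeness is precisely its injectivity. I would prove this by induction along the final $\wtT$-sequence $(\wtT^nF\emptyset)_n$, which converges at $\omega$ by Remark~\ref{rmk:wtT-fb}: the comparison $L^n0\to Q\wtT^nF\emptyset$ at stage $n$ is built from $L$ of the previous comparison together with a component of $\delta_{\Lambda E}$, so it is an isomorphism whenever the relevant components of $\delta_{\Lambda E}$ are; passing to the colimit/limit preserves injectivity because all functors in sight are finitary (Assumption~\ref{ass:fin}).

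For expressiveness, Theorem~\ref{thm:main} already yields invariance of the semantics under trace equivalence, so it remains to prove the converse: trace-inequivalent states are separated by a formula. This is exactly what surjectivity of $\delta_{\Lambda E}$ buys us --- since each $\delta_{FX}$ is onto, the predicates $\sem{\phi}_\gamma$ exhaust enough of $QFX$ to distinguish any two states whose trace semantics differ. I would again reduce, via finitarity, to the finitely generated free algebras $Fn$, where $\delta_{\Lambda E}$ is iso by Example~\ref{exle:LambdaE}, and lift surjectivity to the $QFX$ that separate the states of a given $\gamma\colon X\to BTX$.

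The main obstacle is step (ii), the transfer from finitely generated free algebras to arbitrary ones. The functor $Q=[-,\twobb]$ is contravariant and only continuous, so it converts the initial $L$-colimit into a limit and does not automatically preserve the pointwise mono/epi status of $\delta_{\Lambda E}$ across these limits. What makes the transfer work is that the truth object $\Omega=\twobb$ is finite: as recorded in Example~\ref{exle:one}, the dual adjunction then restricts to an equivalence on finite semi-lattices, so $\delta_{\Lambda E}$ is a genuine isomorphism---not merely mono or epi---at every finite stage, and the finitarity of $B$ and $T$ (Remark~\ref{rmk:wtT-fb}) guarantees the $\omega$-(co)limits computing $I$ and $Z$ are actually reached. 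Consequently both injectivity and surjectivity survive the passage to the limit, and completeness and expressiveness follow simultaneously from the single fact, already isolated in Example~\ref{exle:LambdaE} and Corollary~\ref{cor:LT}, that the one-step semantics is an isomorphism on finitely generated free algebras.
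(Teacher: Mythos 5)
You correctly isolate the same key fact as the paper --- $\delta_{\Lambda E}$ is an isomorphism on finite algebras, $B$ and $\wtT$ preserve finite algebras, and finitarity makes the relevant (co)limits $\omega$-(co)limits --- but both halves of your argument stop short of the actual statements, and the completeness half has a genuine gap. You declare that ``completeness is precisely injectivity of $\sem{\cdot}_\zeta\colon I\to QZ$''. It is not: completeness means that every unprovable equation $\phi=\phi'$ is refuted by an actual coalgebra $\gamma\colon X\to BTX$ at some state $x\in X$. The final $\wtT$-coalgebra $Z$ cannot play the role of such a countermodel: as the paper notes in Example~\ref{exle:B=Pom:2}, its carrier $\P(\Act^*)$ is \emph{not} a free algebra, so $Z$ is not of the form $FX$ and does not arise from any $(B,T)$-coalgebra. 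Injectivity of $I\to QZ$ hands you a separating element of $UZ$, with no way to turn it into a model. The paper's proof supplies exactly the missing steps: since $\phi$ and $\phi'$ already live at a finite stage $n$ and $\delta$ is iso on finite algebras, their images in $Q\wtT^nF\emptyset$ differ; one then equips $\wtT^nF\emptyset=FT^n\emptyset$ (which \emph{is} free) with a $\wtT$-coalgebra structure refuting the equation, observes that the two semantics, being distinct morphisms $FT^n\emptyset\to\twobb$ out of a free algebra, must differ on a generator $\eta_X(x)$ with $X=T^n\emptyset$, and finally transfers this to a $(B,T)$-coalgebra via Proposition~\ref{prop:wtgamma}. (A smaller point: your justification for injectivity surviving the passage to the limit --- ``all functors in sight are finitary'' --- is not the operative reason; what one actually uses is that the projections of the final sequence are split epi, so that $Q$ sends them to monos.)

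The expressiveness half is also not a proof as it stands. The principle you invoke --- surjectivity of $\delta_{FX}$ gives enough predicates to separate trace-inequivalent states --- is unjustified; indeed, the criterion quoted in Section~\ref{S:CoalgLogic} derives expressiveness from surjectivity only for \emph{finite} coalgebras, whereas here $X$ is arbitrary. What makes the trace setting work is that trace inequivalence is witnessed at a finite stage, $\tr_n(x)\not=\tr_n(y)$, so the separation can be carried out inside the finite algebra $\wtT^nF\emptyset$, where stage-$n$ formulas supply all of $Q\wtT^nF\emptyset$ and these predicates separate points; one must then check that the semantics of stage-$n$ formulas factors through $\wttr_n$, which you never do. The paper bypasses all of this with a one-line argument your sketch misses entirely: for $B=\Pom$ the initial $L$-algebra is the free $B$-algebra over the set of traces, so a trace $t$ accepted by $x$ but not by $x'$ \emph{is} a formula, and $x\Vdash t$ while $x'\not\Vdash t$.
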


\begin{proof}
  We write $(L,\delta)$ for $(L_{\Lambda E},\delta_{\Lambda E})$. The
  proof is straightforward due to the following facts: $B$ and $\wtT$
  preserve finite algebras and on finite algebras we have that
  $\delta$ is an isomorphism. In detail:

  Expressiveness means that any two non-trace equivalent states can be
  separated by a formula. Consider a coalgebra $X\to BTX$ with
  $x,x'\in X$ and suppose $x$ accepts trace $t$ and $x'$ does
  not. Since the initial $L$-algebra is the free $B$-algebra over the
  set of traces, $t$ can be considered as a formula and we have
  $x\Vdash t$ and $x'\not\Vdash t$.

  Completeness means that if $L$ does not prove $\phi=\phi'$, then
  there must be a coalgebra $X\to BTX$ and $x\in X$ such that, wlog,
  $x\Vdash\phi$ and $x\not\Vdash\phi'$. Since $\phi$ and $\phi'$ appear
  at some stage $n$ in the initial algebra construction of $L$, the
  semantics of $\phi$ and $\phi'$ is determined at stage $n$. Since
  $\delta$ is an iso on finite algebras, the images
  %$\sem{\phi}_n,\sem{\phi'}_n$ 
  of $\phi$ and $\phi'$ in $Q\wtT^nF\emptyset$ are different. It
  follows from a standard argument that there is a $\wtT$-coalgebra
  $\wtgamma:\wtT^nF\emptyset\to\wtT(\wtT^nF\emptyset)$ that refutes
  the equation $\phi=\phi'$. In particular,
  $\sem{\phi}_\wtgamma\not=\sem{\phi'}_\wtgamma$ are two different
  morphisms $FT^n\emptyset=Q\wtT^nF\emptyset\to\twobb$, so they must
  differ on some generator $\eta_X(x)$ where $\eta_X:X\to BX$ maps
  elements $x$ to singletons $\{x\}$. It follows now from
  Proposition~\ref{prop:wtgamma} that the $(B,T)$-coalgebra
  $U\gamma\circ\eta_X:X\to BTX$ contains a state $x$ with
  $x\Vdash\phi$ and $x\not\Vdash\phi$.
\end{proof}

%\input{Theorem}
%\input{Applications}
%\input{ExampleDepth}
%\input{Conclusions}

% \bibliographystyle{abbrv}
% \bibliography{christiankissig,coalg-short}

\end{document}